\newcommand{\HH}{\mathcal{H}}
\newcommand{\CC}{\mathbb{C}}
\newcommand{\RR}{\mathbb{R}}
\newcommand{\ZZ}{\mathbb{Z}}
\newcommand{\EE}{\mathbb{E}}
\newcommand{\PP}{\mathbb{P}}
\newcommand{\NN}{\mathbb{N}}
\newcommand{\cN}{\mathcal{N}}
\newcommand{\dom}{\mathop{\mathrm{dom}}}
\newcommand{\slim}{\mathop{\mathrm{s{-}lim}}\limits}
\newcommand{\ran}{\mathop{\mathrm{ran}}}
\newcommand{\spec}{\mathop{\mathrm{spec}}}
\newcommand{\diag}{\mathop{\mathrm{diag}}}
\newcommand{\dist}{\mathop{\mathrm{dist}}}
\newcommand{\R}{{\mathbb R}}
\newcommand{\Z}{{\mathbb Z}}
\newcommand{\C}{{\mathbb C}}
\newcommand{\D}{\displaystyle}
\newcommand{\vers}{\operatornamewithlimits{\to}}
\newtheorem{theorem}{Theorem}
\newtheorem{prop}[theorem]{Proposition}
\newtheorem{corol}[theorem]{Corollary}
\theoremstyle{definition}
\newtheorem{rem}[theorem]{\bf Remark}
\begin{document}

\title[Localization on quantum graphs]{Localization on quantum graphs
  with random vertex couplings}

\author{Fr{\'e}d{\'e}ric Klopp}

\address{L.A.G.A., CNRS UMR 7539, Institut Galil{\'e}e, Universit{\'e}
  Paris Nord, 99 avenue Jean-Baptiste Cl{\'e}ment, 93430 Villetaneuse,
  France \& Institut Universitaire de France}
\email{klopp@math.univ-paris13.fr}

\author{Konstantin Pankrashkin} \address{L.A.G.A., CNRS UMR 7539,
  Institut Galil{\'e}e, Universit{\'e} Paris Nord, 99 avenue
  Jean-Baptiste Cl{\'e}ment, 93430 Villetaneuse, France \& Institut
  f{\"u}r Mathematik, Humboldt-Universit{\"a}t zu Berlin, Rudower
  Chaussee 25, 12489 Berlin, Germany} \email{const@math.hu-berlin.de}

\begin{abstract}
  We consider Schr{\"o}dinger operators on a class of periodic quantum
  graphs with randomly distributed Kirchhoff coupling constants at all
  vertices.  Using the technique of self-adjoint extensions we obtain
  conditions for localization on quantum graphs in terms of finite
  volume criteria for some energy-dependent discrete Hamiltonians.
  These conditions hold in the strong disorder limit and at the
  spectral edges.
\end{abstract}

\maketitle



\section*{Introduction}

In the present work we study spectral properties for a special type of
random interactions on quantum graphs, the so-called random Kirchhoff
model. We are going to show that such models can be effectively
treated using well-established methods for the discrete Anderson
model, in particular, with the help of finite volume fractional moment
criteria.

The study of random Schr{\"o}dinger operators on quantum graphs has
become especially active during the last years.  In~\cite{ASW} weakly
disordered tree graphs were studied; it was shown that the absolutely
continuous spectrum is stable in the weak disorder limit.  Random
interaction on radial tree-like graphs were studied in~\cite{HP}; for
the random edge length and random coupling constants it was shown that
the corresponding Schr{\"o}dinger operators exhibit the Anderson
localization at all energies. This generalizes previously known
results on the random necklace graphs~\cite{KSch}.  Schr{\"o}dinger
operators with random potentials on the edges have been studied
in~\cite{EHS} using the multiscale method, where the presence of the
dense pure point spectrum at the bottom of the spectrum was shown.
The authors of~\cite{GLV,HV} have proved the existence of the
integrated density of states and Wegner estimates for periodic quantum
graphs with random interactions (for both random potentials and random
boundary conditions).

Our method consists in a reduction of the spectral problem on quantum
graphs to the study of a family of energy dependent discrete operators
with a random potential. To perform this reduction we use the theory
of self-adjoint extensions, or, more precisely, the machinery of
abstract Weyl functions~\cite{BGP1}.  A reduction of continuous
problems to discrete ones within the localization framework was
exploited in numerous papers on Schr{\"o}dinger operators with random
or quasiperiodic point interactions, see
e.g.~\cite{BMG,DMP,HKK,GM,Kl:93,Kl:95a}, but, as we will see below,
such a correspondence is particularly explicit and efficient for
quantum graphs.

We consider periodic quantum graphs spanned by simple $\ZZ^d$-lattices
with randomly distributed Kirchhoff coupling constants at all vertices
(the precise construction is given in section~\ref{ss1}).  The edges
can carry additional scalar potentials and the quantum graph is not
assumed to be isotropic.  Actually the scheme presented below can be
directly extended to graphs with more complicated combinatorial
properties, but we do not do this to avoid technicalities. The central
points of the paper are theorem~\ref{th-loc}, where a condition of the
Schr{\"o}dinger operator on a quantum graph to have a pure point
spectrum in terms of upper spectral measures is obtained, and
proposition~\ref{prop-measure}, where we provide estimates for the
spectral measures of quantum graphs in terms of associated discrete
operators. These tools reduce the problem to a direct application of
finite volume criteria for discrete Hamiltonians. Using these criteria
we establish localization in the strong disorder regime
(section~\ref{ss-strong}) and localization at the band edges
(section~\ref{ss-lif}) using the Lifshitz asymptotics for the density
of states.

\section{Schr{\"o}dinger operator on a quantum graph}\label{ss1}

\subsection{Construction of Hamiltonians}

For general matters concerning the theory and applications of quantum
graphs, we refer to~\cite{GS,Ku1,Ku2}.

We consider a quantum graph whose set of vertices is identified with
$\ZZ^d$.  By $h_j$, $j=1,\dots,d$, we denote the standard basis
vectors of $\ZZ^d$.

Two vertices $m$, $m'$ are connected by an oriented edge $m\to m'$ iff
$\D |m-m'|:=\sum_{j=1}^d|m_j-m'_j|=1$ and $m_j\le m'_j$ for all
$j=1,\dots,d$; one says that $m$ is the initial vertex and $m'$ is the
terminal vertex.  Hence, each edge $\epsilon$ has the form $m\to
(m+h_j)$ with some $m\in\ZZ^d$ and $j\in\{1,\dots,d\}$; in this case
we will write $\epsilon=(m,j)$.

Fix some $l_j>0$, $j\in\{1,\dots,d\}$, and replace each edge $(m,j)$
by a copy of the segment $[0,l_j]$ in such a way that $0$ is
identified with $m$ and $l_j$ is identified with $m+h_j$.  In this way
we arrive at a certain topological set carrying a natural metric
structure.  We will parameterize the points of the edges by the
distance from the initial vertex.  Point $x$ lying on the edge $(m,j)$
on the distance $t\in[0,l_j)$ from $m$ will be denoted as $x=(m,j,t)$.
There is an ambiguity concerning the coordinates of the vertices, but
this does not influence the constructions below.

The above graph can be embedded into $\RR^d$, if one identifies
$\ZZ^d\ni m\sim p(m):=\sum_{j=1}^d m_j\, l_j h_j\in\RR^d$, $(m,k)\sim
\big[p(m), p(m)+l_k h_k\big]$, but this will not be used.

The quantum state space of the system is
$\D\HH:=\bigoplus_{m\in\ZZ^d}\bigoplus_{j\in\{1,\dots,d\}} \HH_{m,j}$
where $\D\HH_{m,j}=L^2([0,l_j])$, and the elements of $\HH$ will be
denoted by $f=(f_{m,j})$, $f_{m,j}\in\HH_{m,j}$, $m\in\ZZ^d$,
$j=1,\dots,d$, or $f=(f_\epsilon)$, $f_\epsilon\in\HH_\epsilon$,
$\epsilon\in\ZZ^d\times\{1,\dots,d\}$.  In what follows, we denote by
$P_\epsilon= P_{m,j}$ the orthogonal projection from $\HH$ to
$\HH_\epsilon= \HH_{m,j}$, $\epsilon=(m,j)$. We say that a function
$f=(f_{m,j})$ is \emph{concentrated on an edge $(m,j)$} if
$P_{m,j}f=f$, i.e.  if all components of $f$ but $f_{m,j}$ vanish.

Let us describe the Schr{\"o}dinger operator acting in $\HH$. Fix
real-valued potentials $U_j\in L^2([0,l_j])$, $j=1,\dots,d$, and real
constants $\alpha(m)$, $m\in\ZZ^d$. Set $A:=\diag\big(\alpha(m)\big)$;
this is a self-adjoint operator in $l^2(\ZZ^d)$. Denote by $H_A$ the
operator acting as
\begin{subequations}
  \label{eq-sch}
  \begin{equation}
    \label{eq-act}
    (f_{m,j})\mapsto \Big(
    (-\dfrac{d^2}{dt^2}+U_j)f_{m,j}\Big)
  \end{equation}
  on functions $\D(f_{m,j})\in\bigoplus_{m,j} H^2([0,l_j])$ satisfying
  the following boundary conditions:
  \begin{equation}
    \label{eq-cont1}
    f_{m,j}(0)=f_{m-h_k,k}(l_k)=:f(m),\quad j,k=1,\dots,d
  \end{equation}
  (which means the continuity at all vertices) and
  \begin{equation}
    \label{eq:4}
    f'(m)=\alpha(m)f(m),\quad m\in\ZZ^d,
  \end{equation}
\end{subequations}
where
\begin{equation}
  \label{eq:5}
  f'(m):= \sum_{j=1}^d f'_{m,j}(0)-\sum_{j=1}^d f'_{m-h_j,j}(l_j).
\end{equation}
The constants $\alpha(m)$ are usually referred to as \emph{Kirchhoff
  coupling constants}.  The boundary conditions corresponding to zero
Kirchhoff coupling constants are usually called the Kirchhoff boundary
conditions. Non-zero Kirchhoff coupling constants are usually
interpreted as measuring the impurities at the vertices (zero coupling
constants correspond to the \emph{ideal coupling}). Later we will
assume that $\alpha(m)$ are independent identically distributed random
variables, but here we treat first the deterministic case. For
convenience, for $\alpha\in\RR$ we denote by $H_\alpha$ the above
operator $H_A$ with the diagonal $A$, $A=\alpha\,\text{id}$.

Our aim now is to provide a reduction of the spectral problem for
$H_A$ to a family of discrete spectral problems. We will do this using
the machinery of self-adjoint extensions; a self-contained
presentation of this technique in the abstract setting can be found
e.g. in the recent preprint~\cite{BGP1}.

Denote by $S$ the operator acting as \eqref{eq-act} on the functions
$f$ satisfying only the boundary conditions \eqref{eq-cont1}. On the
domain of $S$, one can define linear maps
\begin{equation*}
  f\mapsto \Gamma f:= \big(f(m)\big)_{m\in\ZZ^d}\in l^2(\ZZ^d),\quad
  f\mapsto \Gamma' f:= \big(f'(m)\big)_{m\in\ZZ^d}\in l^2(\ZZ^d)
\end{equation*}
where $f'$ is defined in~\eqref{eq:5}. By the Sobolev embedding
theorems, the maps $\Gamma,\Gamma'$ are well-defined, and the map
$(\Gamma,\Gamma'):\dom S\to l^2(\ZZ^d)\times l^2(\ZZ^d)$ is surjective.
Moreover, by a simple computation, for any $f,g$ in $\dom S$, one has
\begin{equation*}
  \langle f,Sg\rangle-\langle Sf,g\rangle=\langle \Gamma f,\Gamma' g\rangle-
  \langle\Gamma' f,\Gamma g\rangle
\end{equation*}
(see e.g. proposition~1 in~\cite{KP2}).  In the abstract language,
$(\ZZ^d,\Gamma,\Gamma')$ form a \emph{boundary triple} for $S$. This
permits to write a useful formula for the resolvent of $H_A$, which
will play a crucial role below.

First, denote by $H^0$ the restriction of $S$ to $\ker \Gamma$.
Clearly, $H^0$ acts as \eqref{eq-act} on functions $(f_{m,j})$ with
$f_{m,j}\in H^2([0,l_j])$ satisfying the Dirichlet boundary
conditions, $f_{m,j}(0)=f_{m,j}(l_j)=0$ for all $m,j$, and the
spectrum of $H^0$ is just the union of the Dirichlet spectra of the
operators $-\dfrac{d^2}{dt^2}+U_j$ on the segments $[0,l_j]$.

Denote by $\varphi_j$ and $\vartheta_j$ the solutions to $-y''+U_j
y=Ey$ satisfying $\varphi(0;E)=\vartheta'(0;E)=0$ and
$\varphi'(0;E)=\vartheta(0;E)=1$. For short, we denote
$\phi_j(t;E):=\varphi_j(l_j;E)\vartheta_j(t;E)-\vartheta_j(l_j;E)\varphi_j(t;E)$.
Clearly, $\phi_j$ is the solution to the above differential equation
satisfying $\phi_j(l_j;E)=0$ and $-\phi'_j(l_j;E)=1$.

For $E$ outside $\spec H^0$, consider the operator
$\gamma(E):l^2(\ZZ^d)\to \HH$ defined as follows: for $\xi\in
l^2(\ZZ^d)$, $\gamma(E)\xi$ is the unique solution to $(S-E)f=0$ with
$\Gamma f=\xi$.  For each $E$, $\gamma(E)$ is a linear topological
isomorphism between $l^2(\ZZ^d)$ and $\ker(S-E)$. Clearly, in terms of
the functions $\phi_j, \varphi_j,\vartheta_j$ introduced above, one
has
\begin{equation}
  \label{eq-gammae}
  \big(\gamma(E)\xi\big)_{m,j}(t)=
  \dfrac{1}{\varphi_j(l_j;E)}\Big(\xi(m+h_j)\varphi_j(t;E)+\xi(m)\phi_j(t;E)\Big).
\end{equation}
Furthermore, for $E\not\in\sigma(H^0)$, define the operator
$M(E):l^2(\ZZ^d)\to l^2(\ZZ^d)$ by $M(E):=\Gamma'\gamma(E)$. In our
case,
\begin{equation*}
  M(E)\xi(m)=
  \sum_{j=1}^d\dfrac{1}{\varphi_j(l_j;E)} \Big( \xi(m-h_j)+\xi(m+h_j)\Big)
  - \Big( \sum_{j=1}^d
  \dfrac{\vartheta_j(l_j;E)+\varphi'_j(l_j;E)}{\varphi_j(l_j;E)}\Big)\xi(m).
\end{equation*}
We denote for clarity
\begin{equation*}
  a(E):=\sum_{j=1}^d \dfrac{\eta_j(E)}{\varphi_j(l_j;E)}, \quad
  b_j(E):=\dfrac{1}{\varphi_j(l_j;E)},\quad
  \eta_j(E):= \vartheta_j(l_j;E)+\varphi'_j(l_j;E).
\end{equation*}
then
\begin{equation}
  \label{eq-Mab} 
  M(E)\xi(m)=\sum_{j=1}^d b_j(E)\big( \xi(m-h_j)+\xi(m+h_j)\big) -a(E)\xi(m).
\end{equation}
The maps $\gamma$ and $M$ satisfy a number of important properties.
In particular, $\gamma$ and $M$ depend analytically on their argument
(outside $\spec H^0$), and for any admissible real $E$ one has
\begin{equation}
  \label{eq-mgamma}
  \dfrac{dM(E)}{dE}=\gamma^*(E)\gamma(E),        
\end{equation}
and for any non-real $E$ there is $c_E>0$ such that
\begin{equation}
  \label{eq-me}
  \dfrac{\Im M(E)}{\Im E}\ge c_E.
\end{equation}

The resolvents of $H^0$ and $H_A$ are related by the Krein resolvent
formula,
\begin{equation}
  \label{eq-krein}
  (H_A-E)^{-1}=(H^0-E)^{-1}-\gamma(E)\big(M(E)-A\big)^{-1}\gamma^*(\bar E),\quad
  \quad E\notin\spec H^0\cup\spec H_A.
\end{equation}
Moreover, the set $\spec H_A\setminus \spec H^0$ coincides with
$\{E\notin\spec H^0:\, 0\in\spec \big(M(E)-A\big)\}$, and the same
correspondence holds for the eigenvalues with $\gamma(E)$ being an
isomorphism of the corresponding eigensubspaces.

We note that for special quantum graphs one can perform the complete
reduction of the spectral problem to the spectral problem for the
discrete Laplacian on the underlying combinatorial
graph~\cite{BGP1,BGP2,KP}.  In general, the spectrum is rather
complicated and depends on various geometric and arithmetic
parameters, see e.g.~\cite{E95}.

Eq.~\eqref{eq-krein} shows that $(H_{A}-E)^{-1}$ is an integral
operator whose kernel (the Green function) $G_{A}$ has the following
form:
\begin{equation}
  \label{eq-green}
  \begin{split}
    G_{A}\big( (m,j,t), (m',j',t')
    \big)&=\delta_{mm'}\delta_{jj'} G_j(t,t';E)\\
    &\hskip.5cm-\dfrac{1}{\varphi_j(t;E)\varphi_{j'}(t';E)}\,\Big[
    \big(M(E)-A)^{-1}(m,m')\phi_j(t;E) \phi_{j'}(t';E)\\
    &\hskip1cm+\big(M(E)-A)^{-1}(m+h_j,m')\varphi_j(t;E) \phi_{j'}(t';E)\\
    &\hskip1.5cm+\big(M(E)-A)^{-1}(m,m'+h_{j'})\phi_j(t;E) \varphi_{j'}(t';E)\\
    &\hskip2cm+\big(M(E)-A)^{-1}(m+h_j,m'+h_{j'})\varphi_j(t;E)\varphi_{j'}(t';E)
    \Big],
  \end{split}
\end{equation}
where $G_j$ is the Green function for $-d^2/dx^2+U_j$ on
$L^2([0,l_j])$ with the Dirichlet boundary conditions, i.e.
\begin{equation}
  \label{eq-grf1}
  G_j(t,t';E)=
  \begin{cases}
    \dfrac{\varphi_j(t;E)\phi(t';E)}{W_j(E)},& t< t',\\[\bigskipamount]
    \dfrac{\varphi_j(t';E)\phi(t;E)}{W_j(E)}, & t>t',
  \end{cases}\quad
  W_j(E):=\varphi_j(t;E)\phi'_j(t;E)-\varphi'_j(t;E)\phi_j(t;E).
\end{equation}
\subsection{Random Hamiltonians}\label{ss-ran}
On $(\Omega,\PP)$ a probability space, let
$(\alpha_\omega(m))_{m\in\Z^d}$ be a family of independent identically
distributed (i.i.d.) random variables whose common distribution has a
bounded density $\rho$ with support $[\alpha_-,\alpha_+]$.

By a random Hamiltonian acting on the quantum graph, we mean the
family of operators given by Eqs. \eqref{eq-sch} corresponding to the
parameterizing operator $A_\omega:=\{\lambda\alpha_\omega(m)\}$ of
Kirchhoff coupling constants at the vertices, where $\alpha_\omega(m)$
are described above. This family of Hamiltonians will be denoted by
$H_{\lambda,\omega}$ or $H_{A,\omega}$.

For the moment we can set without loss of generality $\lambda=1$ and
denote the Hamiltonians simply by $H_\omega$.

The shifts $\tau_m$, defined by $(\tau_m\omega)_{m'}=\omega_{m+m'}$,
$m,m'\in\ZZ^d$, act as a measure preserving ergodic family on
$\Omega$.  For any $\tau_m$, there exists a unitary map $U_m$ on
$\HH$, $(U_m f)_{m',j'}=f_{m+m',j'}$, $m,m'\in\ZZ^d$,
$j'\in\{1,\dots,d\}$, with $H_{\tau_m\omega}=U^*_m H_\omega U_m$,
which implies the following standard result from the theory of random
operators, the existence of an almost sure spectrum and of almost sure
spectral components (see e.g.~\cite{PF}), i.e. the existence of closed
subsets $\Sigma_\bullet\subset\RR$ and a subset $\Omega'\subset\Omega$
with $\PP(\Omega')=1$ such that $\spec_\bullet
H_{\omega}=\Sigma_\bullet$,
$\bullet\in\{\text{pp},\text{ac},\text{sc}\}$, for any
$\omega\in\Omega'$. Let
$\Sigma=\Sigma_\text{pp}\cup\Sigma_\text{ac}\cup\Sigma_\text{sc}$ be
the almost sure spectrum of $H_\omega$.

By Eq.~\eqref{eq-krein} and the discussion thereafter, for any
$E\notin\spec H^0$ one has the equivalence $E\in\spec H_\omega$ if and
only if $0\in\spec\big(M(E)-A_\omega\big)$.  At the same time,
$M(E)-A_\omega$ is a usual metrically transitive operator in
$l^2(\ZZ^d)$ and hence possesses an almost sure spectrum $\Sigma_M(E)$
which satisfies (see~\cite{PF})
\begin{equation}
  \label{eq-mea}
  \begin{split}
    \Sigma_M(E)&=\spec M(E)-[\alpha_-,\alpha_+]\\
    &= \left[-2\sum_{j=1}^d \big|b_j(E)\big|-a(E)-\alpha_+,
      2\sum_{j=1}^d \big|b_j(E)\big|-a(E)-\alpha_-\right].
  \end{split}
\end{equation}
Hence, the characteristic equation for $E\notin\spec H^0$ to be in the
almost sure spectrum of $H_\omega$ reads
\begin{equation}
  \label{eq-speca}
  \Big(
  2\sum_{j=1}^d \big|b_j(E)\big|-a(E)-\alpha_-
  \Big)\cdot\Big(
  2\sum_{j=1}^d \big|b_j(E)\big|+a(E)+\alpha_+
  \Big)\ge 0.
\end{equation}
So, the spectrum of $H_\omega$ outside the Dirichlet eigenvalues is a
union of bands.

Let us turn to the dependence of $H_{\lambda,\omega}$ on $\lambda$.
The characteristic equation \eqref{eq-speca} for the spectrum becomes
\begin{equation}
  \label{eq-specb}
  \Big(
  2\sum_{j=1}^d \big|b_j(E)\big|-a(E)-\lambda\alpha_-
  \Big)\cdot\Big(
  2\sum_{j=1}^d \big|b_j(E)\big|+a(E)+\lambda\alpha_+
  \Big)\ge 0.
\end{equation}
Let us describe the behavior of the almost sure spectrum as
$\lambda\to+\infty$. Recall the well-known asymptotics~\cite{LS}:
\begin{equation}
  \label{eq-asymp}
  \begin{gathered}
    \eta_j(E)\sim 2\cosh {l_j\sqrt{-E}}, \quad \varphi_j(l_j,E)\sim
    \dfrac{\sinh {l_j\sqrt{-E}}}{\sqrt{-E}}, \quad
    E\to- \infty,\\
    \eta_j(E)\sim 2\cos l_j\sqrt{E}, \quad\varphi_j(l_j,E)\sim
    \dfrac{\sin {l_j\sqrt{E}}}{\sqrt{E}}, \quad E\to+\infty.
  \end{gathered}
\end{equation}
In particular, $b_j(E)=O(e^{-\alpha\sqrt{-E}})$, $\alpha>0$, and
$a(E)\sim 2d\sqrt{-E}$ for $E\to-\infty$.

If $\alpha_-<0<\alpha_+$, then condition~\eqref{eq-specb} can be
satisfied for any $E$ if $\lambda$ is chosen sufficiently large, i.e.
the spectrum tends to cover the whole real axis. The edges of the
spectrum are situated in the domains where the expressions
$2\sum_{j=1}^d \big|b_j(E)\big|\pm a(E)$ are of order $\lambda$; so,
these edges lie in $O(\lambda^{-1})$-neighborhoods of the Dirichlet
eigenvalues and close to $-\infty$.

If $0\in[\alpha_-,\alpha_+]$, then \eqref{eq-specb} will be satisfied
for any $\lambda$ if
\begin{equation*}
  \Big(
  2\sum_{j=1}^d \big|b_j(E)\big|-a(E)\Big)\cdot\Big(
  2\sum_{j=1}^d \big|b_j(E)\big|+a(E)\Big)\ge 0,
\end{equation*}
i.e. the spectrum contains a part which does not depend on $\lambda$;
actually, this part is nothing but the spectrum of the Hamiltonian
$H_0$ corresponding to the zero coupling constants at all vertices
i.e. $\alpha_\omega(m)=$, $\forall m$.

If $\alpha_-$ and $\alpha_+$ are both positive or both negative, for
\eqref{eq-specb} to be satisfied, the expressions $2\sum_{j=1}^d
\big|b_j(E)\big|\pm a(E)$ must be of the same order as $\lambda$, i.e.
must be large.  Therefore, for $\lambda\to+\infty$ the condition
\eqref{eq-speca} can be satisfied only in the following cases:
\begin{itemize}
\item $\varphi(l_j,E)\sim \lambda^{-1}$ for some $j$,
\item $\alpha_+<0$ and $\sqrt{-E}\sim \lambda$.
\end{itemize}
In other words, for $\lambda\to+\infty$ the spectrum on the positive
half-line concentrates in $O(\lambda^{-1})$ neighborhoods of the
Dirichlet eigenvalues.  For $\alpha_+<0$ there is a band going to
infinity on the negative half-line.

Finally, if $\alpha_-<0$ then, there is some spectrum on the negative
half-axis at the energies of order $\sqrt{-E}\sim-\lambda$.
\section{Localization conditions for quantum graphs}
\label{sec:local-cond-quant}
In this section we again set $\lambda=1$ and study the operator
$H_\omega$.  The following spectral characteristics of $H_\omega$ will
be of crucial importance for us.

Let $f,g\in\HH$. Let $\mu^{f,g}$ denote the spectral measure for $H_A$
associated with $H_A$ and $|\mu^{f,g}|$ denote its absolute value.
For any measurable set $F$ and two edges $(m,j)$, $(m',j')$ we set
\begin{equation*}
  \mu^{(m,j),(m',j')}(F):=\sup_{\substack{f=P_{m,j}f,\\g=P_{m',j'}g\\
      \|f\|=\|g\|=1}}|\mu^{f,g}|(F)
\end{equation*}
and call $\mu^{(m,j),(m',j')}$ the \emph{upper spectral measure}
associated with the edges $(m,j)$ and $(m',j')$ and $H_A$.  For the
random Hamiltonian $H_\omega$, the corresponding quantities get an
additional subindex $\omega$. Recall that for $\mu$ a complex valued
regular Borel measure and $F$ a Borel set, one defines
\begin{equation}
  \label{eq:2}
  |\mu|(F)=\sup_{f\in\mathcal{C}_0(\R),\ |f|_\infty\leq1}
  \left|\int_Ff(E)d\mu(E) \right|.
\end{equation}
We provide localization criteria for $H_A$ in terms of the upper
spectral measures; this extends to the quantum graph case the
localization criteria known for discrete Hamiltonians, cf.
theorem~IV.4 and corollary~IV.5 in~\cite{KS}.

\begin{theorem}
  \label{th-loc}
  Let $F\subset\RR$. Assume that, for any $(m,j)$, one has
  \begin{equation}
    \label{eq-mufin}
    \sum_{m'\in\ZZ^d}\sum_{j'=1}^d \mu^{(m,j),(m',j')}(F)<\infty,
  \end{equation}
  then $H_A$ has only pure point spectrum in $F$.
\end{theorem}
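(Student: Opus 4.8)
The plan is to reduce the assertion to a statement about vectors concentrated on a single edge and then to invoke a standard dynamical criterion for pure point spectrum, in the same spirit as the discrete criterion of theorem~IV.4 and corollary~IV.5 in~\cite{KS}. Write $E_\cdot(H_A)$ for the spectral projection of $H_A$ and let $\HH_{\mathrm{pp}}$ be the closed span of its eigenvectors. It is enough to show that $E_F(H_A)f\in\HH_{\mathrm{pp}}$ for every $f$ with $f=P_{m,j}f$ for some edge $(m,j)$: finite linear combinations of such $f$ are dense in $\HH$, the operator $E_F(H_A)$ is bounded, and $\HH_{\mathrm{pp}}$ is closed, so this yields $\ran E_F(H_A)\subset\HH_{\mathrm{pp}}$, which is precisely the assertion that $H_A$ has pure point spectrum in $F$.

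\emph{Step 1 (a uniform tightness estimate).} Fix an edge $(m,j)$ and $f=P_{m,j}f$ with $\|f\|=1$, and set $\psi=E_F(H_A)f$. For any edge $(m',j')$ and any $g\in\HH_{m',j'}$ the spectral theorem gives $\langle g,e^{-itH_A}\psi\rangle=\int_F e^{-it\lambda}\,d\mu^{g,f}(\lambda)$, so $|\langle g,e^{-itH_A}\psi\rangle|\le|\mu^{g,f}|(F)\le\|g\|\,\mu^{(m,j),(m',j')}(F)$ by the definition of the upper spectral measure. Taking the supremum over $g\in\HH_{m',j'}$ with $\|g\|\le1$ yields
\[
  \sup_{t\in\RR}\big\|P_{m',j'}e^{-itH_A}\psi\big\|\le\mu^{(m,j),(m',j')}(F).
\]
Let $P_N$ be the orthogonal projection onto $\bigoplus_{|m'|\le N}\bigoplus_{j'=1}^d\HH_{m',j'}$; it has finite rank, hence is compact. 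Since $\mu^{(m,j),(m',j')}(F)\le|\mu^{f,g}|(\RR)\le1$ and $\|(1-P_N)\chi\|^2=\sum_{|m'|>N}\sum_{j'}\|P_{m',j'}\chi\|^2$, summing the displayed bound over the edges outside the box of size $N$ gives
\[
  \sup_{t\in\RR}\big\|(1-P_N)e^{-itH_A}\psi\big\|^2
  \le\sum_{|m'|>N}\sum_{j'=1}^d\mu^{(m,j),(m',j')}(F)^2
  \le\sum_{|m'|>N}\sum_{j'=1}^d\mu^{(m,j),(m',j')}(F),
\]
and the right-hand side tends to $0$ as $N\to\infty$ since it is the tail of the convergent series in \eqref{eq-mufin}.

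\emph{Step 2 (tightness $\Rightarrow$ pure point).} The estimate of Step 1, together with $\|e^{-itH_A}\psi\|=\|\psi\|\le1$, shows that the orbit $\mathcal O=\{e^{-itH_A}\psi:\,t\in\RR\}$ is totally bounded: given $\delta>0$, pick $N$ with $\sup_t\|(1-P_N)e^{-itH_A}\psi\|<\delta/2$ and cover the bounded set $P_N\mathcal O\subset\ran P_N$ (a finite-dimensional space) by finitely many balls of radius $\delta/2$; their centres then cover $\mathcal O$ within $\delta$. Hence $\mathcal O$ is precompact, and it is a classical fact that precompactness of the orbit forces $\psi\in\HH_{\mathrm{pp}}$. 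Concretely, if one writes $\psi=\psi_{\mathrm{pp}}+\psi_{\mathrm c}$ with $\psi_{\mathrm c}$ in the continuous spectral subspace of $H_A$ and assumes $\psi_{\mathrm c}\neq0$, then the orbit of $\psi_{\mathrm c}=P_{\mathrm c}\psi$ is again precompact, so there is $N$ with $\|P_Ne^{-itH_A}\psi_{\mathrm c}\|^2\ge\frac34\|\psi_{\mathrm c}\|^2>0$ for all $t$; this contradicts $\frac1T\int_0^T\|P_Ne^{-itH_A}\psi_{\mathrm c}\|^2\,dt\to0$, which is the RAGE theorem applied to the compact operator $P_N$. Therefore $\psi_{\mathrm c}=0$.

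Combining the two steps with the density argument of the first paragraph finishes the proof. The only place where the structure of the quantum graph is used is the orthogonal decomposition $\HH=\bigoplus_{m',j'}\HH_{m',j'}$ entering the bookkeeping in Step 1; the genuinely non-formal ingredient is the RAGE theorem of Step 2, and I expect that step — i.e. turning the uniform-in-time spatial tightness of $e^{-itH_A}E_F(H_A)f$ into membership in $\HH_{\mathrm{pp}}$ — to be the main point requiring care, while the rest is a direct unravelling of the definition of $\mu^{(m,j),(m',j')}$ and of the spectral theorem.
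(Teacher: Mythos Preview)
Your Step~1 is correct and essentially coincides with the paper's own tail estimate. The genuine gap is in Step~2: the projection $P_N$ is \emph{not} finite rank and \emph{not} compact. Each edge space $\HH_{m',j'}=L^2([0,l_{j'}])$ is infinite-dimensional, so $\ran P_N=\bigoplus_{|m'|\le N}\bigoplus_{j'}\HH_{m',j'}$ is an infinite-dimensional closed subspace and $P_N$ is a non-compact orthogonal projection. Consequently your precompactness argument for the orbit fails (a bounded set in $\ran P_N$ need not be totally bounded), and the RAGE theorem cannot be invoked with $P_N$ as the compact operator. This is exactly the point where the quantum-graph setting differs from the discrete one in~\cite{KS}, where the analogous projection onto finitely many sites \emph{is} finite rank.

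The paper repairs this by composing $P_N$ with the resolvent: it proves separately (proposition~\ref{prop-comp}) that $P_\Lambda(H_A-E)^{-1}$ is Hilbert--Schmidt for any finite $\Lambda$ and any $E\notin\spec H_A$, using the explicit Green function~\eqref{eq-green}. It then applies the Amrein--Georgescu criterion (proposition~\ref{thm-AG}) with $S=(H_A-i)^{-1}$ and $F_r=P_r$, for which the required compactness of $F_rS$ is exactly the content of that proposition. Your argument can be fixed along the same lines: replace $P_N$ by $P_N(H_A-i)^{-1}$ in the RAGE step (noting that $(H_A-i)^{-1}$ commutes with the spectral projections and has dense range), but you then owe the reader the Hilbert--Schmidt computation, which is the actual technical content here.
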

\begin{proof}[Proof of theorem~\ref{th-loc}]
  We use the following result from \cite{AG} (theorem on p.~642):
  \begin{prop}\label{thm-AG}
    Let $H$ be a self-adjoint operator in a Hilbert space $\HH$ and
    $F_r$ be a family of orthogonal projections such that
    $\slim_{r\to+\infty} F_r=1$.  Suppose that there exists a family
    $\{S_n\}$ of linear operators, such that each $S_n$ is bounded,
    defined everywhere, and commutes with $H$, and the strong limit
    $\D S:=\mathrm s-\lim_{n\to\infty} S_n$ exists and $\overline{\ran
      S}=\HH$.  Assume additionally that $F_rS_n$ is compact for any
    $r$ and $n$. Then, the invariant subspace $\HH_{pp}$ of $H$
    corresponding to the pure point spectrum admits the following
    description:
    \begin{equation*}
      \HH_{pp}=\big\{
      f\in\HH:\, \lim_{r\to\infty}\sup_{t\in\RR}(1-F_r) e^{itH} f=0
      \big\}.
    \end{equation*}
  \end{prop}
  and the the technical result
  \begin{prop}
    \label{prop-comp}
    Let $\Lambda$ be a subset of $\ZZ^d$. Denote by $P_\Lambda$ the
    orthogonal projection from $\HH$ to the span of the functions
    $(f_{m,j})$ with $f_{m,j}= 0$ for $m\notin\Lambda$.  For any
    finite $\Lambda$ and any $E\notin\spec H_A$, the operator
    $T:=P_\Lambda (H_A-E)^{-1}$ is Hilbert-Schmidt, hence compact.
  \end{prop}
  \noindent that we prove in the appendix~\ref{sec:appendix}.

  Denote by $P_F$ denote the spectral projection onto $F$
  corresponding to $H_{A}$.  It is sufficient to show that $P_F f$
  belongs to the invariant space of $H_A$ associated with the point
  spectrum for any $f\in\HH$. Clearly, it suffices to consider only
  functions $f$ concentrated on a single edge.

  Let us use proposition~\ref{thm-AG}. Take $S= S_n= (H_A-i)^{-1}$.
  As $F_r$ we take the orthogonal projections from $\HH$ to the
  functions $(f_{m,j})$ with $f_{m,j}=0$ for $|m|>r$.  Clearly, $S$ is
  bounded, commutes with $H_{A}$, $\ran S=\dom H_{A}$ is dense in
  $\HH$, $F_r S$ is compact for any $r$ due to
  proposition~\ref{prop-comp}, and $F_r$ strongly converge to the
  identity operator. Hence, the assumptions of
  proposition~\ref{thm-AG} are satisfied.

  Take any $f$ with $f=P_{m,j}f$. Clearly, in our setting,
  \begin{equation*}
    \begin{split}
    \sup_{t\in\RR}\|(1-F_r)e^{-itH_A} P_F f)\|^2
    &=\sup_{t\in\RR}\sum_{|m'|>r}\sum_{j'=1}^d
    \big\|\big(e^{-itH_A} P_F f\big)_{m',j'}\big\|^2\\
    &=\sup_{t\in\RR}\sum_{|m'|>r}\sum_{j'=1}^d
    \langle e^{-itH_A} P_F f, P_{m',j'}e^{-itH_A} P_F f\rangle\\
    &\le \sum_{|m'|>r}\sum_{j'=1}^d \sup_{t,s\in\RR} \Big|\big\langle
    e^{-isH_A} P_F f, P_{m',j'}e^{-itH_A} P_F f \big\rangle \Big|.      
    \end{split}
  \end{equation*}
  Due to the definition of the absolute value of a measure one has
  \begin{equation*}
    \sup_{s\in\RR} \Big|\big\langle
    e^{-isH_A} P_F f, P_{m',j'} e^{-itH_A} P_F f \big\rangle
    \Big|
    \le
    \big|\mu^{f, P_{m',j'}e^{-itH_A} P_F f}\big|(F).
  \end{equation*}
  Using the definition of $\mu^{(m,j),(m',j')}(F)$ one obtains
  \begin{equation*}
    \sup_{t\in\RR}\big|\mu^{f, P_{m',j'}e^{-itH_A} P_F f}\big|(F) \le
    \mu^{(m,j),(m',j')}(F)\|P_F f\| \,\|f\|
    \le \mu^{(m,j),(m',j')}(F)\|f\|^2.
  \end{equation*}
  Finally, we obtain
  \begin{equation*}
    \sup_{t\in\RR}\big\|(1-F_r)e^{-itH_A} P_F f)\big\|^2\le
    \|f\|^2\,\sum_{|m'|>r}\sum_{j'=1}^d \mu^{(m,j),(m',j')}(F),
  \end{equation*}
  and by \eqref{eq-mufin},
  $\lim\limits_{r\to+\infty}\sup_{t\in\RR}\big\|(1-F_r)e^{-itH_A} P_F
  f\big\|^2=0$.
\end{proof}

Theorem~\ref{th-loc} admits a direct application to the random
Hamiltonians $H_\omega$.
\begin{corol}
  \label{corol1}
  Let $F\subset\RR$. Assume that, for any edge $(m,j)$, one has
  \begin{equation}
    \label{eq-mufin2}
    \EE\Big(\sum_{m'\in\ZZ^d}\sum_{j'=1}^d
    \mu_\omega^{(m,j),(m',j')}(F)\Big)<\infty,
  \end{equation}
  then $H_{\omega}$ has only pure point spectrum in $F$ almost surely.
\end{corol}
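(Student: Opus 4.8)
The plan is to deduce Corollary~\ref{corol1} from Theorem~\ref{th-loc} by an almost sure argument. First I would observe that the hypothesis \eqref{eq-mufin2} together with Tonelli's theorem (all summands are nonnegative) implies that, for each fixed edge $(m,j)$, the random variable
\begin{equation*}
  X_{m,j}(\omega):=\sum_{m'\in\ZZ^d}\sum_{j'=1}^d\mu_\omega^{(m,j),(m',j')}(F)
\end{equation*}
is finite almost surely: an integrable nonnegative function is finite a.e. Hence for each $(m,j)$ there is a full-measure event $\Omega_{m,j}$ on which $X_{m,j}(\omega)<\infty$.

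Next I would take the intersection $\Omega_0:=\bigcap_{m\in\ZZ^d}\bigcap_{j=1}^d\Omega_{m,j}$, which is again of full measure since it is a countable intersection of full-measure sets. On $\Omega_0$ the hypothesis \eqref{eq-mufin} of Theorem~\ref{th-loc} holds for \emph{every} edge $(m,j)$ simultaneously, with $H_A$ replaced by $H_\omega$. Applying Theorem~\ref{th-loc} pointwise in $\omega\in\Omega_0$ then yields that $H_\omega$ has only pure point spectrum in $F$ for every $\omega\in\Omega_0$, which is the desired almost sure statement.

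There is essentially no analytic obstacle here; the one point that deserves a remark is measurability, namely that $\omega\mapsto\mu_\omega^{(m,j),(m',j')}(F)$ is a measurable function of $\omega$ so that the expectation in \eqref{eq-mufin2} makes sense and Tonelli applies. This follows from the measurability of $\omega\mapsto H_\omega$ (in the strong resolvent sense, guaranteed by the construction in section~\ref{ss-ran} via \eqref{eq-krein}, since $A_\omega$ depends measurably on $\omega$), together with the representation \eqref{eq:2} of $|\mu^{f,g}|(F)$ as a supremum over a countable dense family of continuous functions and the representation of $\mu^{(m,j),(m',j')}(F)$ as a supremum over a countable dense family of unit vectors $f,g$ in the separable spaces $\ran P_{m,j}$, $\ran P_{m',j'}$. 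The main ``work'' is thus just bookkeeping with null sets and a routine measurability check; no new spectral input beyond Theorem~\ref{th-loc} is needed.
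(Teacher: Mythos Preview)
Your proof is correct and follows essentially the same route as the paper's: finite expectation implies almost sure finiteness for each $(m,j)$, then a countable intersection of full-measure events yields a single $\Omega_0$ on which Theorem~\ref{th-loc} applies pointwise. Your additional remark on measurability is a welcome piece of care that the paper leaves implicit.
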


\begin{proof}
  Eq.~\eqref{eq-mufin2} says, in particular, that for any $(m,j)$
  there exists $\Omega_{m,j}\subset\Omega$ with $\PP(\Omega_{m,j})=1$
  such that, for $\omega\in\Omega_{m,j}$, $\D\sum_{m'\in\ZZ^d}
  \sum_{j'=1}^d\mu_\omega^{(m,j),(m',j')}(F)<\infty$. \\
  Denote $\D\Omega':=\bigcap_{m,j}\Omega_{m,j}$; as the set of all
  $(m,j)$ is countable, $\PP(\Omega')=1$. Clearly,
  $\D\sum_{m'\in\ZZ^d}\sum_{j'=1}^d
  \mu_\omega^{(m,j),(m',j')}(F)<\infty$ for all $(m,j)$ and all
  $\omega\in\Omega'$, and the spectrum of $H_\omega$ in $F$ is pure
  point for any $\omega\in\Omega'$ by theorem~\ref{th-loc}.
\end{proof}

In the next result, we show that assumption~\eqref{eq-mufin2} is a
consequence of a finite volume criteria \emph{{\`a} la}~\cite{ASFH} on the
discrete Hamiltonians defined in section~\ref{ss1}. The finite volume
criteria is expressed in terms of finite volume approximations of our
operators that we define first.

Let $\Lambda$ be a subset of $\ZZ^d$. Denote by $H^\Lambda_A$ the
operator acting by the same rule \eqref{eq-act} on functions $f$
satisfying the boundary conditions $f'(m)=\alpha(m)f(m)$ for
$m\in\Lambda$ and the Dirichlet boundary conditions $f(m)=0$ for
$m\notin\Lambda$.  In other words, the functions from the domain
$H^\Lambda_A$ satisfy the same boundary conditions as for $H_A$ at the
vertices lying in $\Lambda$ and those as for $H^0$ at the vertices
outside $\Lambda$.  One can relate the operators of $H_A^\Lambda$ and
$H^0$ by a formula similar to~\eqref{eq-krein} using e.g. the
construction of~\cite{pos}.

Namely, consider $l^2(\Lambda)$ as a subset of $l^2(\ZZ^d)$ and denote
by $\Pi_\Lambda$ the orthogonal projection from $l^2(\ZZ^d)$ to
$l^2(\Lambda)$.  Denote also $M_\Lambda(E):=P_\Lambda M(E)
\Pi_{\Lambda}$, $A_\Lambda:=\Pi_\Lambda A \Pi_{\Lambda}$; these two
operators are to be considered as acting in $l^2(\Lambda)$, and
$\gamma_\Lambda(E) = \gamma(E)\Pi_\Lambda$, then, for $E\notin\spec
H^0\cup\spec H^\Lambda_A$, the following resolvent formula holds:
\begin{equation}
  \label{eq-krein2}
  (H^\Lambda_A-E)^{-1}=(H^0-E)^{-1}-\gamma_\Lambda(E)
  \big(M_\Lambda(E)-A_\Lambda\big)^{-1}\gamma^*_\Lambda(\bar E).
\end{equation}
As previously, for any $E\notin\spec H^0$ one has
$\ker(H^\Lambda_A-E)=\gamma_\Lambda(E)\ker\big(M_\Lambda(E)-
A_\Lambda\big)$.

In the appendix~\ref{sec:appendix}, we prove the following auxiliary
result
\begin{prop}
  \label{prop-conv}
  Denote $\Lambda_N:=\{m\in\ZZ^d:\,\max_j |m_j|\le N\}$, $N\in\NN$,
  then the operators $H^{\Lambda_N}_A$ converge to $H_A$ in the strong
  resolvent sense as $N\to\infty$.
\end{prop}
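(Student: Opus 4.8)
Since all the operators involved are self-adjoint, it suffices to establish strong convergence of the resolvents at the single point $E=i$, which lies in the resolvent set of $H_A$, of every $H^{\Lambda_N}_A$, and (the Dirichlet spectra being real) of $H^0$. Thus both Krein-type formulas \eqref{eq-krein} and \eqref{eq-krein2} apply at $E=i$, $\bar E=-i$, and subtracting them reduces the claim to showing that $\gamma_{\Lambda_N}(i)\big(M_{\Lambda_N}(i)-A_{\Lambda_N}\big)^{-1}\gamma^*_{\Lambda_N}(-i)$ converges strongly to $\gamma(i)\big(M(i)-A\big)^{-1}\gamma^*(-i)$. Using $\gamma_{\Lambda_N}(i)=\gamma(i)\Pi_{\Lambda_N}$, $\gamma^*_{\Lambda_N}(-i)=\Pi_{\Lambda_N}\gamma^*(-i)$, the fact that $\gamma(i)$ and $\gamma^*(-i)$ are fixed bounded operators, and $\Pi_{\Lambda_N}\to1$ strongly, the whole problem is reduced to the following assertion: the operators $\big(M_{\Lambda_N}(i)-A_{\Lambda_N}\big)^{-1}$, extended by $0$ to all of $l^2(\ZZ^d)$, converge strongly to $\big(M(i)-A\big)^{-1}$.

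Write $T:=M(i)-A$ and let $T_N$ be the (finite) matrix $\Pi_{\Lambda_N}T\Pi_{\Lambda_N}$ acting in $l^2(\Lambda_N)$. The key input is \eqref{eq-me} at $E=i$: it gives $\Im M(i)\ge c_i>0$, hence $\Im T\ge c_i$ (as $A$ is self-adjoint), and the same lower bound is inherited by every compression, $\Im T_N\ge c_i$. Therefore $T$ and all the $T_N$ are boundedly invertible with $\|T^{-1}\|\le c_i^{-1}$ and $\|T_N^{-1}\|\le c_i^{-1}$ \emph{uniformly in $N$}; in particular a possible unboundedness of $A$ causes no difficulty, since each $T_N$ is a finite matrix and $T^{-1}$ is controlled by \eqref{eq-me}. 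For the strong convergence, fix $\xi\in l^2(\ZZ^d)$ and set $\eta:=T^{-1}\xi\in\dom T=\dom A$, $\eta_N:=\Pi_{\Lambda_N}\eta$. A direct computation gives $T_N\eta_N=\Pi_{\Lambda_N}\xi-\Pi_{\Lambda_N}T(1-\Pi_{\Lambda_N})\eta$, and here $T(1-\Pi_{\Lambda_N})\eta\to0$ in $l^2(\ZZ^d)$: the part $M(i)(1-\Pi_{\Lambda_N})\eta$ tends to $0$ because $M(i)$ is bounded and $(1-\Pi_{\Lambda_N})\eta\to0$, while $\|A(1-\Pi_{\Lambda_N})\eta\|^2=\sum_{m\notin\Lambda_N}|\alpha(m)|^2|\eta(m)|^2\to0$ as a tail of the convergent series $\|A\eta\|^2$. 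Hence $T_N\eta_N\to\xi$, and applying $T_N^{-1}$ and invoking the uniform bound, $T_N^{-1}\Pi_{\Lambda_N}\xi-\eta_N=T_N^{-1}\big(\Pi_{\Lambda_N}\xi-T_N\eta_N\big)\to0$; together with $\eta_N\to\eta$ this gives $T_N^{-1}\Pi_{\Lambda_N}\xi\to T^{-1}\xi$, as required.

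Feeding this back into \eqref{eq-krein}--\eqref{eq-krein2} yields $(H^{\Lambda_N}_A-i)^{-1}\to(H_A-i)^{-1}$ strongly, i.e. $H^{\Lambda_N}_A\to H_A$ in the strong resolvent sense. The only substantial step is the strong convergence of the truncated inverses, and it rests on exactly two features: the uniform invertibility supplied by \eqref{eq-me}, and the vanishing of the boundary defect $T(1-\Pi_{\Lambda_N})\eta$, which uses the nearest-neighbour (hopping) structure \eqref{eq-Mab} of $M(i)$ together with the $l^2$-summability built into $\dom A$. An alternative route, bypassing $M$, is to show that the edge-wise compactly supported elements of $\dom H_A$ form a core --- via a cutoff that is locally constant near every vertex, so that the boundary conditions \eqref{eq:4} are preserved and the commutator with $-d^2/dt^2+U_j$ is $L^2$-small by the same tail estimate --- and to note that every such function already belongs to $\dom H^{\Lambda_N}_A$ with $H^{\Lambda_N}_A f=H_A f$ once $N$ is large; but the argument above via the resolvent formulas is shorter and makes no use of the geometry of the cutoff.
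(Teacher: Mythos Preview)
Your argument is correct and takes a genuinely different route from the paper's. The paper invokes a variant of Kato's strong-resolvent criterion: it exhibits the compactly supported elements of $\dom H_A$ as a core (via an explicit edgewise cutoff $u_j$ vanishing near one endpoint), checks that each such $f$ lies in $\dom H^{\Lambda_N}_A$ with $H^{\Lambda_N}_A f=H_A f$ for large $N$, and then supplies the required uniform bound on $\|(H^{\Lambda_N}_A-E)^{-1}\|$ by combining \eqref{eq-krein2} with the dissipativity estimate \eqref{eq-me}. In other words, the paper proves exactly the ``alternative route'' you sketch in your closing paragraph.

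Your main line instead subtracts the two Krein formulas at $E=i$ and reduces everything to strong convergence of the compressed inverses $\big(M_{\Lambda_N}(i)-A_{\Lambda_N}\big)^{-1}\Pi_{\Lambda_N}\to\big(M(i)-A\big)^{-1}$ in $l^2(\ZZ^d)$. The two ingredients you isolate are precisely the right ones: \eqref{eq-me} gives the uniform bound $\|T_N^{-1}\|\le c_i^{-1}$ by compression of the numerical range, and the defect $T(1-\Pi_{\Lambda_N})\eta$ vanishes because $M(i)$ is bounded and $\eta\in\dom A$ makes $A\eta\in l^2$. This approach is shorter, avoids the cutoff construction and the appeal to Kato, and works verbatim even when $A$ is unbounded; the paper's approach, on the other hand, yields as a by-product the independently useful fact that compactly supported functions form a core for $H_A$. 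Both proofs ultimately rely on \eqref{eq-me} in the same way.
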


that will be used in the proof of our localization criterion.

\begin{prop}
  \label{prop-measure}
  Let $F\subset\RR$ be a segment containing no Dirichlet eigenvalues.
  Assume that there exists $A, a>0$ and $s\in(0,1)$ such that
  \begin{equation}
    \label{eq-Efin}
    \EE\left|\left(M_{\Lambda}(E)-A_{\Lambda,\omega}\right)^{-1}(m,m')
    \right|^s\le A e^{-a|m-m'|} 
  \end{equation}
  for all finite $\Lambda\subset\ZZ^d$ and all $E\in F$.  Then, there
  exist $B,c>0$ such that for any two edges $(m,j)$ and $(m',j')$ one
  has
  \begin{equation}
    \label{eq-PF}
    \EE\big(\mu^{(m,j),(m',j')}_\omega(F)\big)\le B e^{-c|m-m'|}.      
  \end{equation}
\end{prop}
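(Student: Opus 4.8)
The strategy is to estimate, for $f$ concentrated on $(m,j)$ and $g$ on $(m',j')$ with $\|f\|=\|g\|=1$, the total variation $|\mu^{f,g}_\omega|(F)$ by a fractional moment of the resolvent of $H_{A,\omega}$, and then to feed in the Green function formula \eqref{eq-green}, which expresses that resolvent through the matrix elements of $(M(E)-A_\omega)^{-1}$ at the endpoints of the two edges --- exactly the objects controlled by hypothesis \eqref{eq-Efin}.

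\emph{Step 1 (from the spectral measure to a resolvent fractional moment).} The random couplings enter $H_{A,\omega}$ as a rank-one change of the self-adjoint boundary condition at each vertex, so $(H_{A,\omega})$ is an Anderson-type family in the sense required by the fractional-moment method. Representing $\mu^{f,g}_\omega$ through boundary values of the resolvent and invoking the spectral-averaging estimates underlying Theorem~IV.4 and Corollary~IV.5 of~\cite{KS} (cf. also \cite{ASFH,AG}), one obtains, for $s\in(0,1)$ as in \eqref{eq-Efin} and a segment $F$ free of Dirichlet eigenvalues, a constant $C_{s,F}$ with
\begin{equation}
\label{eq-plan1}
\EE\big(\mu^{(m,j),(m',j')}_\omega(F)\big)\le C_{s,F}\int_F\EE\Big(\,\sup_{\substack{f=P_{m,j}f,\ g=P_{m',j'}g\\ \|f\|=\|g\|=1}}\big|\langle f,(H_{A,\omega}-E-i0)^{-1}g\rangle\big|^{s}\Big)\,dE .
\end{equation}
The normalisation $\|f\|=\|g\|=1$ is what makes the right-hand side finite: on the pure point part of $\mu^{f,g}_\omega$ in $F$ one has $\sum_{E_n\in F}|\langle f,P_{E_n}g\rangle|$ with $|\langle f,P_{E_n}g\rangle|\le1$, hence $|\langle f,P_{E_n}g\rangle|\le|\langle f,P_{E_n}g\rangle|^{s}$; the absolutely and singular continuous parts are handled in the same way.

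\emph{Step 2 (inserting the Green function).} The supremum in \eqref{eq-plan1} is the operator norm of the block $P_{m,j}(H_{A,\omega}-E-i0)^{-1}P_{m',j'}$, an integral operator on $L^2([0,l_j])\times L^2([0,l_{j'}])$ with kernel $G_{A,\omega}\big((m,j,t),(m',j',t');E+i0\big)$. For non-real energies this kernel is given by \eqref{eq-green}, and since $F$ carries no Dirichlet eigenvalue the functions $\varphi_j,\phi_j,\vartheta_j$, the Dirichlet Green functions $G_j$, and $1/\varphi_j(l_j;\cdot)$ are all uniformly bounded on a complex neighbourhood of $F$; bounding the operator norm of the block by the $L^2$-norm of its kernel therefore yields a constant $C_F$ (independent of $\omega$ and of $E\in F$) such that
\begin{equation}
\label{eq-plan2}
\big\|P_{m,j}(H_{A,\omega}-E-i0)^{-1}P_{m',j'}\big\|\le C_F\Big(\delta_{mm'}\delta_{jj'}+\!\!\sum_{\substack{p\in\{m,\,m+h_j\}\\ q\in\{m',\,m'+h_{j'}\}}}\!\!\big|(M(E+i0)-A_\omega)^{-1}(p,q)\big|\Big).
\end{equation}
For $E+i\epsilon$ with $\epsilon>0$ the inverse $(M(E+i\epsilon)-A_\omega)^{-1}$ exists by \eqref{eq-me}; one then lets $\epsilon\downarrow0$, using Fatou's lemma in \eqref{eq-plan1}.

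\emph{Step 3 (discrete fractional moments, conclusion, and the main obstacle).} One must still promote the hypothesis \eqref{eq-Efin} --- a bound on the finite-volume operators $M_\Lambda(E)-A_{\Lambda,\omega}$ at real $E$ --- to a bound on $\EE|(M(E+i\epsilon)-A_\omega)^{-1}(p,q)|^{s}$ for the infinite-volume operator, uniform in $E\in F$ and $\epsilon\in(0,1]$. As $M(z)-A_\omega=M(z)-\sum_m\alpha_\omega(m)|e_m\rangle\langle e_m|$ is again a rank-one-at-each-site random perturbation of the deterministic $M(z)$, this is precisely the setting of the finite-volume criterion of~\cite{ASFH}: combining \eqref{eq-Efin}, the resolvent identity \eqref{eq-krein2}, the strong resolvent convergence $H^{\Lambda_N}_A\to H_A$ from Proposition~\ref{prop-conv} (and the attendant convergence $M_{\Lambda_N}(z)\to M(z)$ of the discrete operators and their resolvents), a Combes--Thomas/geometric-resolvent decoupling, and Fatou's lemma, one gets $A',a'>0$ with $\EE|(M(E+i\epsilon)-A_\omega)^{-1}(p,q)|^{s}\le A'e^{-a'|p-q|}$. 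Substituting \eqref{eq-plan2} into \eqref{eq-plan1} and using $(\sum_ix_i)^{s}\le\sum_ix_i^{s}$ for $x_i\ge0$, $s\in(0,1)$, then gives
\begin{equation*}
\EE\big(\mu^{(m,j),(m',j')}_\omega(F)\big)\le C_{s,F}\,C_F^{s}\,|F|\Big(\delta_{mm'}\delta_{jj'}+\sum_{p,q}A'e^{-a'|p-q|}\Big),
\end{equation*}
and since $|p-q|\ge|m-m'|-2$ for $p\in\{m,m+h_j\}$, $q\in\{m',m'+h_{j'}\}$, the right-hand side is $\le Be^{-c|m-m'|}$ for suitable $B,c>0$ (the $\delta$-term being relevant only when $m=m'$), which is \eqref{eq-PF}. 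One subtlety: \eqref{eq-plan1} is first obtained for open subintervals of $F$ whose endpoints carry no mass, and is extended to the closed segment by a slight enlargement of $F$ inside the Dirichlet-eigenvalue-free region. \emph{The main obstacle} is exactly this transfer of the discrete, real-energy, finite-volume input into a spectral-measure bound for the continuous infinite-volume operator: it rests on the rank-one structure of the randomness made visible by the boundary-triple picture and on the monotonicity $M'(E)=\gamma(E)^*\gamma(E)\ge0$ (eq.~\eqref{eq-mgamma}), and the delicate point is keeping all constants uniform in $E\in F$ and in the volume while pushing the standard discrete arguments through Krein's formula.
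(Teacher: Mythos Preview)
Your high-level architecture --- relate the spectral measure to the resolvent, insert the Green-function formula~\eqref{eq-green}, and feed in the discrete fractional-moment bound --- is the same as the paper's. The gap is Step~1: inequality~\eqref{eq-plan1} is precisely the substance of the proposition, not a black box. The references you invoke (\cite{KS}, \cite{ASFH}) prove such bounds for discrete Anderson models with an additive site potential; here the randomness enters through the boundary condition, and the map $\alpha(m')\mapsto H_{A,\omega}$ is \emph{not} a rank-one perturbation of $H_{A,\omega}$ (it is rank-one only at the level of $M(E)-A$, which is exactly why the paper does the spectral averaging on the discrete side). Your one-line justification, $|\langle f,P_{E_n}g\rangle|\le|\langle f,P_{E_n}g\rangle|^{s}$, produces $\sum_n|\langle f,P_{E_n}g\rangle|^s$ but says nothing about why this sum is dominated by $\int_F\EE|\langle f,(H-E-i0)^{-1}g\rangle|^s\,dE$; that passage is a spectral-averaging / change-of-variables step from the eigenvalues $E_k$ to a random coupling, and it carries all the content.

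The paper supplies this step by (i) passing to finite $\Lambda$ (via Proposition~\ref{prop-conv} and Fatou) so that the spectrum in $F$ is discrete; (ii) decomposing $g\in\HH_{m',j'}$ as $\Hat P(E)P(E)g+\Tilde P(E)P(E)g$, aligned with $\phi_{j'}(\cdot;E)$ and $\varphi_{j'}(\cdot;E)$, so that $\gamma_\Lambda^*(E)$ applied to each piece is supported at a \emph{single} lattice site ($m'$ resp.\ $m'+h_{j'}$); (iii) at that site, doing a rank-one resampling to write the eigenvector as $\Hat\varphi_E\propto(M_\Lambda(E)-\Hat A_\Lambda)^{-1}\delta_{m'}$ and the eigenvalue condition as $\alpha(m')=\Hat v+\Hat\Gamma(E)$; and (iv) using the identity $\Hat\Gamma'(E)=-\|\gamma_\Lambda(E)\Hat\varphi_E\|^2$, which follows from $M'=\gamma^*\gamma$, to turn $\sum_k\delta(E-E_k)$ into $\delta(\alpha(m')-\Hat v-\Hat\Gamma(E))\,\|\gamma_\Lambda(E)\Hat\varphi_E\|^2$. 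After this, averaging over $\alpha(m')$ against the density $\rho$ converts the eigenvalue sum into a genuine $\int_F\cdots\,dE$, and hypothesis~\eqref{eq-Efin} applies directly in finite volume --- no lift to infinite volume of the discrete bound is needed. Steps (ii)--(iv) have no counterpart in your outline, and without them \eqref{eq-plan1} is asserted rather than proved.
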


\begin{rem}\label{rem3} By theorem~\ref{th-loc}, the result of
  proposition \ref{prop-measure} clearly implies that, under the
  assumptions of proposition~\ref{prop-measure}, the spectrum is
  almost surely pure point in $F$. By the results of~\cite{ASFH}, in
  particular, theorem 4.1 therein, the assumption of
  proposition~\ref{prop-measure} also implies that, for $E\in F$, the
  spectrum of $M(E)-A_\omega$ is localized in an open interval
  containing $0$. Hence, using the remark following Krein's resolvent
  formula, equation~\eqref{eq-krein}, for $E$ in the spectrum of
  $H_\omega$ and not an eigenvalue of $H_0$ (i.e. not a Dirichlet
  eigenvalue), $0$ is an eigenvalue for $M(E)-A_\omega$. It is
  associated to an eigenfunction, say $\xi$, that is exponentially
  localized in $\Z^d$.  The corresponding eigenfunction for $H_\omega$
  at energy $E$, say, $\varphi$ is then given by
  $\varphi=\gamma(E)\xi$.  By~\eqref{eq-gammae}, $\varphi$ is also
  exponentially localized in the sense that there exists $C>0$ such
  that
\begin{equation*}
  \sup_{1\leq j\leq d}\|\varphi\|_{\mathcal{H}_{m,j}}\leq Ce^{-|m|/C}.
\end{equation*}
Moreover, as in the appendix A of~\cite{ASFH}, by~\eqref{eq:2},
proposition \ref{prop-measure} implies dynamical localization bounds for the operator
$H_A$ in the following sense
\begin{equation}
  \label{eq:3}
  \EE\left(\sup_{\substack{f=P_{m,j}f,\\g=P_{m',j'}g\\
        \|f\|=\|g\|=1}}|\langle
    f,e^{itH_\omega}\mathbf{1}_F(H_\omega)g\rangle|\right)\leq
  Ce^{-|m-m'|/C}.
\end{equation}
\end{rem}

\begin{proof}[\bf Proof of proposition \ref{prop-measure}]
  In view of proposition~\ref{prop-conv}, $H^{\Lambda_n}_{A,\omega}$
  converges to $H_{A,\omega}$ in the strong resolvent sense for a
  suitable choice of finite $\Lambda_n\subset\ZZ^d$ and any $\omega$.
  This implies the weak convergence
  $\mu^{f,g}_{\Lambda,\omega}\to\mu^{f,g}_\omega$ for any
  $f,g,\omega$.  Consequently, by the Fatou lemma, for any $F$ one has
  $\EE\big(\mu^{(m,j),(m',j')}_{\omega} (F)\big)\le \lim\inf
  \EE\big(\mu^{(m,j),(m',j')}_{\Lambda,\omega}(F)\big)$.  In other
  words, it is sufficient to show the existence of positive $B$ and
  $c$ such that for any $(m,j)$ and $(m',j')$ the estimate
  $\EE\big(\mu^{(m,j),(m',j')}_{\Lambda,\omega}(F)\big)\le
  Be^{-c|m-m'|}$ holds for sufficiently large $\Lambda$.  In proving
  this estimate, we follow essentially the steps of~\cite[theorem
  A.1]{ASFH} or \cite[lemma 3.1]{Az}.

  Pick two edges $(m,j)$ and $(m',j')$ and consider $\Lambda\subset
  \ZZ^d$ containing $m$ and $m'$ and all vetrices $n$ with $|n-m'|\le
  2$.

  Denote $\Hat A_\omega:=A_\omega+(\Hat v-\alpha(m')) \Pi_{m'}$, where
  $\Pi_{m'}$ is the projection onto $\delta_{m'}$ and $\Hat v$ is
  distributed identically to $\alpha(m')$, and consider the modified
  Hamiltonian $H_{\Hat A,\omega}$. Note that under our assumptions
  $\EE \big(|\Hat v|^\delta\big)<\infty$ for any $\delta>0$. For
  almost every $\hat{v}$, if $0$ is an eigenvalue of
  $M_\Lambda(E)-A_{\Lambda,\omega}$, then $M_\Lambda(E)-\Hat
  A_{\Lambda,\omega}$ is invertible. Consider also the operators
  $\Tilde A_\omega:=A_\omega+(\Tilde v-\alpha(m'+h_{j'}))
  \Pi_{m'+h_{j'}}$ with $\Tilde v$ distributed identically to
  $\alpha(m'+h_{j'})$, to which the previous observations apply as
  well.

  We note that the spectrum of $H^{\Lambda}_{A,\omega}$ outside the
  Dirichlet eigenvalues is discrete.  Almost surely, each eigenvalue
  of $M_\Lambda(E)-A_\Lambda$ is simple.  One has
  \begin{equation}
    \label{eq-mu1}
    \mu^{f,g}_{\Lambda,\omega}(F)=\sum_{E_k\in\spec
      H^\Lambda_\omega\cap F}
    \dfrac{\langle f,\gamma_\Lambda(E_k)\xi_k\rangle\,\langle
      \gamma_\Lambda(E_k)\xi_k,g\rangle}{\|\gamma_\Lambda(E_k)\xi_k\|^2},
  \end{equation}
  where $E_k$ and $\xi_k$ satisfy
  $(M(E_k)-A_{\Lambda,\omega})\xi_k=0$, $\xi_k\ne 0$.

  Let $E\notin\spec H^0$. In the space $\HH_{m',j'}= L^2[0,l_{j'}]$
  consider the subspace $L(E)$ spanned by the linearly independent
  functions $\varphi_{j'}(E):=\varphi_{j'}(\cdot,E)$ and
  $\phi_{j'}(E):=\phi_{j'}(\cdot,E)$. Denote by $P(E)$ the orthogonal
  projection from $\HH_{m',j'}$ to $L(E)$. Any function $h\in L(E)$
  can be uniquely represented in the form $h=\Hat h+\Tilde h$ with
  $\Hat h,\Tilde h\in L(E)$, $\Hat h\perp \varphi_{j'}(E)$, $\Tilde
  h\perp \phi_{j'}(E)$.  Denote the corresponding projections $L(E)\ni
  h \mapsto \Hat h\in L(E)$ and $L(E)\ni h \mapsto \Tilde h\in L(E)$
  by $\Hat P(E)$ and $\Tilde P(E)$, respectively. In view of the
  analytic dependence of $\varphi_{j'}(E)$ and $\phi_{j'}(E)$, the
  norms of the operators $\Hat P(E) P(E)$ and $\Tilde P(E) P(E)$ are
  uniformly bounded,
  \begin{equation}
    \label{eq-pepe}
    \|\Hat P(E) P(E)\|+\|\Tilde P(E) P(E)\|\le p,\quad p>0, \quad E\in F.
  \end{equation}

  From now on we assume that $f=P_{m,j}f$ and $g=P_{m',j'}g$. Having
  in mind the explicit expression for $\gamma(E)$
  (see~\eqref{eq-gammae}), we compute
  \begin{equation}
    \label{eq-gste}
    [\gamma_\Lambda^*(E)g](m)=\sum_{s=1}^d
    \dfrac{1}{\overline{\varphi_s(l_s;E)}}\big(\langle
    \varphi_s(E),g_{m-h_s,s}\rangle+\langle
    \phi_j(E),g_{m,s}\rangle\big),\quad m\in\Lambda,
  \end{equation}
  and one concludes that, for any $E\in F$, one has
  $\gamma^*(E)g=\gamma^*(E)\Hat P(E) P(E) g +\gamma^*(E)\Tilde P(E)
  P(E) g$, which permits us to rewrite \eqref{eq-mu1} in the form
  \begin{multline}
    \label{eq-mu1a}
    \mu^{f,g}_{\Lambda,\omega}(F)= \sum_{E_k\in\spec
      H^\Lambda_\omega\cap F} \dfrac{\langle
      f,\gamma_\Lambda(E_k)\xi_k\rangle\,\langle
      \gamma_\Lambda(E_k)\xi_k,\Hat P(E_k) P(E_k) g
      \rangle}{\|\gamma_\Lambda(E_k)\xi_k\|^2}\\+ \sum_{E_k\in\spec
      H^\Lambda_\omega\cap F} \dfrac{\langle
      f,\gamma_\Lambda(E_k)\xi_k\rangle\,\langle
      \gamma_\Lambda(E_k)\xi_k,\Tilde P(E_k) P(E_k) g
      \rangle}{\|\gamma_\Lambda(E_k)\xi_k\|^2}.
  \end{multline}

  Denote
  \begin{equation*}
    \Hat
    \varphi_E:=\dfrac{\big(M_\Lambda(E)-A_{\Lambda,\omega}\big)^{-1}
      \delta_{m'}}{\langle\delta_{m'},
      \big(M_\Lambda(E)-A_{\Lambda,\omega}\big)^{-1}\delta_{m'}\rangle}
    =\dfrac{\big(M_\Lambda(E)-\Hat A_{\Lambda,\omega}\big)^{-1}
      \delta_{m'}}{\langle\delta_{m'},
      \big(M_\Lambda(E)-\Hat A_{\Lambda,\omega}\big)^{-1}\delta_{m'}\rangle}.
  \end{equation*}

  Assume that $\xi$ is an eigenvector of
  $M_\Lambda(E)-A_{\Lambda,\omega}$ corresponding to the eigenvalue
  $0$. Then $0=(M_\Lambda(E)-A_{\Lambda,\omega})\xi=(M_\Lambda(E)-\Hat
  A_{\Lambda,\omega})\xi +\big(\Hat v-\alpha(m')\big)\Pi_{m'}\xi$.
  Almost surely the matrix $M_\Lambda(E)-\Hat A_{\Lambda,\omega}$ is
  invertible and $\Pi_{m'}\xi\ne 0$ (otherwise $\xi$ would be an
  eigenvector of $M_\Lambda(E)-\Hat A_{\Lambda,\omega}$).  Hence,
  $\xi=\big(\alpha(m')-\Hat v\big)\langle
  \delta_{m'},\xi\rangle(M_\Lambda(E)-\Hat A_{\Lambda,\omega})^{-1}
  \delta_{m'}$. This means, that $\xi= C\Hat \varphi_E$ with a
  suitable constant $C$.

  By a direct calculation,
  $\big(M_\Lambda(E)-A_{\Lambda,\omega}\big)\Hat
  \varphi_E=(\alpha(m')-\Hat v-\Hat\Gamma(E))\delta_{m'}$, where
  \begin{equation*}
    \Hat\Gamma(E)=-\dfrac{1}{\langle\delta_{m'},
      \big(M_\Lambda(E)-\Hat A_{\Lambda,\omega}\big)^{-1}\delta_{m'}\rangle}.
  \end{equation*}
  Hence, the spectrum of $H^{\Lambda_n}_{A,\omega}$ in $F$ is
  determined by the condition $\alpha(m')-\Tilde v=\Hat\Gamma(E)$, and
  $\Hat \varphi_E$ are the corresponding (non-normalized)
  eigenfunctions. Clearly, one has always
  \begin{equation}
    \label{eq-nf1}
    \langle \delta_{m'},\Hat \varphi_E\rangle=1.
  \end{equation}
  Using these observations one can write almost surely
  \begin{equation}
    \label{eq-mu2a}
    \dfrac{\langle f,\gamma_\Lambda(E_k)\xi_k\rangle\,\langle
      \gamma_\Lambda(E_k)\xi_k, g
      \rangle}{\|\gamma_\Lambda(E_k)\xi_k\|^2}=
    \dfrac{\langle f,\gamma_\Lambda(E_k)\Hat \varphi_{E_k}\rangle
      \langle \gamma_\Lambda(E_k)\Hat
      \varphi_{E_k},g\rangle}{\|\gamma_\Lambda(E_k)\Hat
      \varphi_{E_k}\|^2}. 
  \end{equation}
  Exactly in the same way one shows that the spectrum can be
  determined from the condition $\alpha(m'+h_{j'})-\Tilde
  v=\Tilde\Gamma(E)$ with
  \begin{equation*}
    \Tilde\Gamma(E)=-\dfrac{1}{\langle\delta_{m'+h_{j'}},
      \big(M_\Lambda(E)-\Tilde A_{\Lambda,\omega}\big)^{-1}
      \delta_{m'+h_{j'}}\rangle} 
  \end{equation*}
  and that
  \begin{equation}
    \label{eq-mu2b}
    \dfrac{\langle f,\gamma_\Lambda(E_k)\xi_k\rangle\,\langle
      \gamma_\Lambda(E_k)\xi_k, g
      \rangle}{\|\gamma_\Lambda(E_k)\xi_k\|^2}=
    \dfrac{\langle f,\gamma_\Lambda(E_k)\Tilde\varphi_{E_k}\rangle
      \langle
      \gamma_\Lambda(E_k)\Tilde\varphi_{E_k},g\rangle}
    {\|\gamma_\Lambda(E_k)\Tilde\varphi_{E_k}\|^2}, 
  \end{equation}
  where
  \begin{equation*}
    \Tilde\varphi_E:=\dfrac{\big(M_\Lambda(E)-A_{\Lambda,\omega}\big)^{-1}
      \delta_{m'+h_{j'}}}{\langle\delta_{m'+h_{j'}},
      \big(M_\Lambda(E)-A_{\Lambda,\omega}\big)^{-1}\delta_{m'+h_{j'}}\rangle}
    =\dfrac{\big(M_\Lambda(E)-\Tilde
      A_{\Lambda,\omega}\big)^{-1}\delta_{m'+h_{j'}}}
    {\langle\delta_{m'},\big(M_\Lambda(E)-\Tilde
      A_{\Lambda,\omega}\big)^{-1}\delta_{m'+h_{j'}}\rangle}.
  \end{equation*}
  and obviously
  \begin{equation}
    \label{eq-nf2}
    \langle \delta_{m'+h_{j'}},\Tilde \varphi_E\rangle=1.
  \end{equation}
  Combining the representions \eqref{eq-mu1} and \eqref{eq-mu1a} for
  the spectral measures with the identitites \eqref{eq-mu2a} and
  \eqref{eq-mu2b} one obtain the following:
  \begin{equation}
    \label{eq-mrepr1}
    \begin{split}
      \mu^{f,g}_{\Lambda,\omega}(dE)&= \dfrac{\langle
      f,\gamma_\Lambda(E)\Hat \varphi_{E}\rangle \langle
      \gamma_\Lambda(E)\Hat
      \varphi_{E},g\rangle}{\|\gamma_\Lambda(E)\Hat \varphi_{E}\|^2}
    \cdot \Big( \sum_k \delta(E-E_k) \Big)dE\\
    &=\dfrac{\langle f,\gamma_\Lambda(E)\Tilde \varphi_{E}\rangle
      \langle \gamma_\Lambda(E)\Tilde
      \varphi_{E},g\rangle}{\|\gamma_\Lambda(E)\Tilde \varphi_{E}\|^2}
    \cdot \Big( \sum_k \delta(E-E_k) \Big)dE\\
    &=\dfrac{\langle f,\gamma_\Lambda(E)\Hat \varphi_{E}\rangle \langle
      \gamma_\Lambda(E)\Hat \varphi_{E},\Hat
      P(E)P(E)g\rangle}{\|\gamma_\Lambda(E)\Hat \varphi_{E}\|^2}
    \cdot \Big( \sum_k \delta(E-E_k) \Big)dE\\
    &\hskip1cm+\dfrac{\langle f,\gamma_\Lambda(E)\Tilde \varphi_{E}\rangle
      \langle \gamma_\Lambda(E)\Tilde \varphi_{E},\Tilde
      P(E)P(E)g\rangle}{\|\gamma_\Lambda(E)\Tilde \varphi_{E}\|^2}
    \cdot \Big( \sum_k \delta(E-E_k) \Big)dE.
    \end{split}
  \end{equation}
  Now, note that
  \begin{equation*}
    \sum \delta(E-E_k)=-\delta(\alpha(m')-\Hat
    v-\Hat\Gamma(E))\Hat\Gamma'(E)= -\delta(\alpha(m'+h_{j'})-\Tilde
    v-\Tilde\Gamma(E))\Tilde\Gamma'(E)
  \end{equation*}
  and that, using~\eqref{eq-Mab} and~\eqref{eq-mgamma}, one obtains,
  \begin{equation*}
    \Hat\Gamma'(E)=-\Hat\Gamma^2(E)
    \langle\delta_{m'},(M_\Lambda(E)-\Hat
    A_{\Lambda,\omega})^{-1}M'_\Lambda(E)(M_\Lambda(E)-\Hat
    A_{\Lambda,\omega})^{-1}\delta_{m'}\rangle=
    -\|\gamma_\Lambda(E)\Hat\varphi_E\|^2.
  \end{equation*}
  and
  \begin{equation*}
    \Tilde\Gamma'(E)= -\Tilde\Gamma^2(E)
    \langle\delta_{m'+h_{j'}},(M_\Lambda(E)-\Tilde
    A_{\Lambda,\omega})^{-1}M'_\Lambda(E)(M_\Lambda(E)-\Tilde
    A_{\Lambda,\omega})^{-1}\delta_{m'+h_{j'}}\rangle=
    -\|\gamma_\Lambda(E)\Tilde\varphi_E\|^2. 
  \end{equation*}
  This allows one to rewrite \eqref{eq-mrepr1} as
  \begin{equation}
    \label{eq-mrepr2}
    \begin{split}
      \mu^{f,g}_{\Lambda,\omega}(dE)&= \delta(\alpha(m')-\Hat
      v-\Hat\Gamma(E)) \langle f,\gamma_\Lambda(E)\Hat
      \varphi_{E}\rangle \langle \gamma_\Lambda(E)\Hat
      \varphi_{E},g\rangle dE\\
      &= \delta(\alpha(m'+h_{j'})-\Tilde v-\Tilde\Gamma(E)) \langle
      f,\gamma_\Lambda(E)\Tilde \varphi_{E}\rangle \langle
      \gamma_\Lambda(E)\Tilde \varphi_{E},g\rangle dE\\
      &= \delta(\alpha(m')-\Hat v-\Hat\Gamma(E))\langle f,\gamma_\Lambda(E)\Hat \varphi_{E}\rangle \langle
      \gamma_\Lambda(E)\Hat \varphi_{E}, \Hat P(E) P(E)g\rangle dE\\
      &\hskip1cm+ \delta(\alpha(m'+h_{j'})-\Tilde
      v-\Tilde\Gamma(E)) \langle
      f,\gamma_\Lambda(E)\Tilde \varphi_{E}\rangle \langle
      \gamma_\Lambda(E)\Tilde \varphi_{E},\Tilde P(E) P(E) g\rangle
      dE.
    \end{split}
  \end{equation}

  According to the general properties of spectral measures, one always
  has $\mu^{f,g}_{\Lambda,\omega}(dE)=\Psi^{f,g}(E)
  \mu^{f,f}_{\Lambda,\omega}(dE)$, where $\Psi$ is a measurable
  function satisfying
  \begin{equation*}
    \int_\RR |\Psi^{f,g}(E)|^2 \mu^{f,f}_{\Lambda,\omega}(dE)\le
    \|g\|^2 \|f\|^2.
  \end{equation*}
  In our case, the first two equalities in \eqref{eq-mrepr2} imply
  \begin{gather}
    \label{eq-est1}
    \int_\RR \big|\langle \gamma_\Lambda(E)\Hat \varphi_E,
    h\rangle\big|^2 \delta\big(\alpha(m')-\Hat v-\Hat\Gamma(E)\big)dE
    \le \|h\|^2\\
    \intertext{and}
    \label{eq-est2}
    \int_\RR \big|\langle \gamma_\Lambda(E)\Tilde \varphi_E,
    h\rangle\big|^2 \delta(\alpha(m'+h_{j'})-\Tilde v-\Tilde\Gamma(E))
    dE \le \|h\|^2
  \end{gather}
  for any $h$.

  Now we use the third respresentation in \eqref{eq-mrepr2} for the
  spectral measure to estimate the upper spectral measure for the
  edges $(m,j)$ and $(m',j')$.  Clearly,
  \begin{equation*}
    \begin{split}
      |\mu^{f,g}_{\Lambda,\omega}|(F)&= \int_F \big| \langle
      f,\gamma_\Lambda(E)\Hat \varphi_{E}\rangle \langle
      \gamma_\Lambda(E)\Hat \varphi_{E}, \Hat P(E) P(E)g\rangle\big|
      \delta(\alpha(m')-\Hat v-\Hat\Gamma(E)) dE\\
      &+ \int_F \big| \langle f,\gamma_\Lambda(E)\Tilde
      \varphi_{E}\rangle \langle \gamma_\Lambda(E)\Tilde
      \varphi_{E},\Tilde P(E) P(E) g\rangle\big|
      \delta(\alpha(m'+h_{j'})-\Tilde v-\Tilde\Gamma(E)) dE.
    \end{split}
  \end{equation*}

  The construction of the operators $\Hat P(E)P(E)$ and $\Hat P(E)P(E)$
  implies that $\Pi_{m'}\gamma^*(E)_\Lambda \Hat P(E)P(E)=\gamma^*(E)
  \Hat P(E)P(E)$ and $\Pi_{m'+h_{j'}}\gamma^*_\Lambda(E) \Tilde
  P(E)P(E)=\gamma^*(E)\Tilde P(E)P(E)$.  Together with the
  normalization conditions \eqref{eq-nf1} and \eqref{eq-nf2}, for any
  $g$, this implies
  \begin{equation}
    \label{eq-gp}
    \begin{gathered}
      \Big|\big\langle \gamma_\Lambda(E)\Hat \varphi_E, \Hat P(E)P(E)
      g \big\rangle \Big|= \big\| \gamma^*_\Lambda(E)\Hat P(E)P(E) g
      \big\|,\\
      \Big|\big\langle \gamma_\Lambda(E)\Tilde \varphi_E, \Tilde
      P(E)P(E) g \big\rangle \Big|= \big\| \gamma^*_\Lambda(E)\Tilde
      P(E)P(E) g \big\|.
    \end{gathered}
  \end{equation}
  Now, we estimate
  \begin{equation}
    \label{eq-mup1}
    \begin{split}
      &\EE\left(\sup_{\|f\|=\|g\|=1} |\mu^{f,g}_{\Lambda,\omega}|(F)\right)\\
      &\le \EE\left(\sup_{\|f\|=\|g\|=1} \int_F \big| \langle
        f,\gamma_\Lambda(E)\Hat \varphi_{E}\rangle \langle
        \gamma_\Lambda(E)\Hat \varphi_{E}, \Hat P(E)
        P(E)g\rangle\big|
        \delta(\alpha(m')-\Hat v-\Hat\Gamma(E)) dE\right)\\
      &+\EE\left(\sup_{\|f\|=\|g\|=1} \int_F \big| \langle
        f,\gamma_\Lambda(E)\Tilde \varphi_{E}\rangle \langle
        \gamma_\Lambda(E)\Tilde \varphi_{E},\Tilde P(E) P(E)
        g\rangle\big|
        \delta(\alpha(m'+h_{j'})-\Tilde v-\Tilde\Gamma(E)) dE\right).
    \end{split}
  \end{equation}
  Using \eqref{eq-pepe} and \eqref{eq-gp}, one gets
  \begin{equation}
    \label{eq-mup2}
    \begin{split}
      &\EE\left(\sup_{\|f\|=\|g\|=1} \int_F \big| \langle
      f,\gamma_\Lambda(E)\Hat \varphi_{E}\rangle \langle
      \gamma_\Lambda(E)\Hat \varphi_{E}, \Hat P(E) P(E)g\rangle\big|
    \delta(\alpha(m')-\Hat v-\Hat\Gamma(E)) dE\right)\\
    &\leq p\, G\, \EE\left(\sup_{\|f\|=1} \int_F \left| \langle
      f,\gamma_\Lambda(E)\Hat \varphi_{E}\rangle\right|
      \delta(\alpha(m')-\Hat v-\Hat\Gamma(E)) dE\right),
    \end{split}
  \end{equation}
  where $\D G:=\sup_{E\in F} \|\gamma^*_\Lambda(E)\|<\infty$.  Using
  the H{\"o}lder inequality and \eqref{eq-est1}, one obtains
  \begin{equation*}
    \begin{split}
      &\EE\left(\sup_{\|f\|=1}\int_F\left|\langle
          f,\gamma_\Lambda(E)\Hat
          \varphi_E\rangle\right|\delta\big(\alpha(m')-\Hat
        v-\Hat\Gamma(E)\big)dE\right)\\
      &\leq \left[ \EE\left(|\alpha(m')-\Hat v|^\alpha\sup_{\|f\|=1}
          \int_F \left|\langle f,\gamma_\Lambda(E)(M_\Lambda(E)-\Hat
            A_\Lambda)^{-1}\delta_{m'}\rangle\right|^\alpha
          \delta\left(\alpha(m')-\Hat v-\Hat\Gamma(E)\right)dE
        \right)\right]^{1/(2-\alpha)}
    \end{split}
  \end{equation*}
  for any $\alpha\in(0,1)$.  Using again the H{\"o}lder inequality we
  get
  \begin{equation*}
    \begin{split}
      &\EE\left(|\alpha(m')-\Hat v|^\alpha\sup_{\|f\|=1}\int_F
        \left|\langle f,\gamma_\Lambda(E)(M_\Lambda(E)-\Hat
          A_\Lambda)^{-1}\delta_{m'}\rangle\right|^\alpha
        \delta\left(\alpha(m')-\Hat v-\Hat\Gamma(E)\right)dE
      \right)\\
    &\leq 2\EE (|\Hat v|^\alpha)^{\alpha/\delta}
    \left[\EE\left(\sup_{\|f\|=1} \int_F \left|\langle
          f,\gamma_\Lambda(E)(M_\Lambda(E)-\Hat
          A_\Lambda)^{-1}\delta_{m'}\rangle\right|^{s}
        \delta\left(\alpha(m')-\Hat v-\Hat\Gamma(E)\right)dE\right)
    \right]^{\alpha/s}
    \end{split}
  \end{equation*}
  with $\alpha/s+\alpha/\delta=1$. Using \eqref{eq-Efin}, we estimate,
  \begin{equation*}
    \begin{split}
      &\EE\left(\sup_{\|f\|=1}\int_F \left|\langle
          f,\gamma_\Lambda(E)(M_\Lambda(E)-\Hat
          A_\Lambda)^{-1}\delta_{m'}\rangle\right|^{s}
        \delta\left(\alpha(m')-\Hat v-\Hat\Gamma(E)\right)dE
      \right)\\
      &\le \int_F \EE\left(\sup_{\|f\|=1}\left| \langle
          f,\gamma_\Lambda(E)(M_\Lambda(E)-\Hat
          A_\Lambda)^{-1}\delta_{m'}\rangle\right|^{s}\right)
      \rho\left(\Hat v+\Hat\Gamma(E)\right)dE\\
      &\le R |F| \sup_{E\in F}
      \EE\left(\left\|\left(\gamma_\Lambda(E)(M_\Lambda(E)-\Hat
            A_\Lambda)^{-1}\delta_{m'}\right)_{m,j}\right\|^s\right)\\
      &\le R |F| C \sup_{E\in F} \EE\left(\left|(M_\Lambda(E)-\Hat
          A_\Lambda)^{-1}(m,m')\right|^s\right)\\&\hskip3cm+ R |F| C
      \sup_{E\in F} \EE\left(\left|(M_\Lambda(E)-\Hat
          A_\Lambda)^{-1}(m+h_j,m')\right|^s\right)\\
    &\le R |F| C \left( A e^{-a|m-m'|} + A e^{-a|m+h_j-m'|} \right)\le
    A R |F| C (1+e^a) e^{-a|m-m'|},
    \end{split}
  \end{equation*}
  where $R=\sup\rho$ and
  \begin{equation*}
    C=\max\Big(\sup_{E\in F}
    \Big\|\dfrac{\varphi_j(\cdot,E)}{\varphi_j(l_j,E)}\Big\|^s,
    \sup_{E\in F}\Big\|\dfrac{\phi_j(\cdot,E)}{\varphi_j(l_j,E)}\Big\|^s
    \Big).
  \end{equation*} 
  Finally, as follows from \eqref{eq-mup2}, one has
  \begin{equation}
    \label{eq-mup3}
    \EE\left(\sup_{\|f\|=\|g\|=1} \int_F \big| \langle
    f,\gamma_\Lambda(E)\Hat \varphi_{E}\rangle \langle
    \gamma_\Lambda(E)\Hat \varphi_{E}, \Hat P(E) P(E)g\rangle\big|
    \delta(\alpha(m')-\Hat v-\Hat\Gamma(E)) dE \right)\le \Hat B
    e^{-\Hat c|m-m'|}
  \end{equation}
  with some $\Hat B, \Hat c>0$.

  One can estimate the second term on the right-hand side of
  \eqref{eq-mup1} in exactly the same way.  Using \eqref{eq-pepe} and
  \eqref{eq-gp} and the inequality \eqref{eq-est2}, after similar
  steps, one gets
  \begin{equation*}
    \begin{split}
      \EE&\left(\sup_{\|f\|=1}\left(\int_F \left|\langle
            f,\gamma_\Lambda(E)(M_\Lambda(E)-\Tilde
            A_\Lambda)^{-1}\delta_{m'+h_{j'}}\rangle\right|^{s}
          \delta\left(\alpha(m'+h_{j'})-\Tilde
            v-\Tilde\Gamma(E)\right)dE\right)\right)\\
      &\le \int_F \EE\left(\sup_{\|f\|=1}\left| \langle
          f,\gamma_\Lambda(E)(M_\Lambda(E)-\Tilde
          A_\Lambda)^{-1}\delta_{m'+h_{j'}}\rangle\right|^{s}\right)
      \rho\left(\Tilde v+\Tilde\Gamma(E)\right)dE\\
      &\le R |F| \sup_{E\in
        F}\EE\left(\left\|\left(\gamma_\Lambda(E)(M_\Lambda(E)-\Tilde
            A_\Lambda)^{-1}\delta_{m'+h_{j'}}\right)_{m,j}\right\|^s\right)\\
      &\le R |F| C \sup_{E\in F} \EE\left(\left|(M_\Lambda(E)-\Tilde
          A_\Lambda)^{-1}(m,m'+h_{j'})\right|^s\right)\\
      &\hskip1cm+ R |F| C \sup_{E\in F}
      \EE\left(\left|(M_\Lambda(E)-\Tilde
          A_\Lambda)^{-1}(m+h_j,m'+h_{j'})\right|^s\right)\\ &\le R
      |F| C \left(A e^{-a|m-m'-h_{j'}|} + A e^{-a|m+h_j-m'-h_{j'}|}
      \right)\\ &\le A R |F| C (e^a+e^{2a}) e^{-a|m-m'|},
    \end{split}
  \end{equation*}
  which gives, for some positive constants $\Tilde B$ and $\Tilde c$,
  \begin{equation}
    \label{eq-mup4}
    \EE\left(\sup_{\|f\|=\|g\|=1} \int_F \left| \langle
        f,\gamma_\Lambda(E)\Tilde \varphi_{E}\rangle \langle
        \gamma_\Lambda(E)\Tilde \varphi_{E}, \Tilde P(E)
        P(E)g\rangle\right|\delta(\alpha(m'+h_{j'})-\Tilde
      v-\Tilde\Gamma(E)) dE\right) \le \Tilde B e^{-\Tilde c|m-m'|}.
  \end{equation}
  Substituting \eqref{eq-mup3} and \eqref{eq-mup4} into
  \eqref{eq-mup1} we obtain the requested inequality \eqref{eq-PF}.
\end{proof}

\section{Finite volume criteria}

We now will show how the results of~\cite{ASFH} apply in our case.

We need some constants characterizing the distribution of the coupling
constants.  Let $s\in(0,1)$. Define 
\begin{equation*}
  C_s=\sup_{A\in\mathcal{M}_{2\times2}(\C)}\int\int \rho(du)\rho(dv)
  \Big|\left[\Big(A -\diag(u,v)\big)^{-1}\right]_{jk}\Big|^s.
\end{equation*}
In~\cite{ASFH}, it is shown that $C_s$ is finite. It is also shown
that for any $s\in(0,1/4)$, if, for $a,b,c\in\CC$, we define
$f(V):=(V-a)^{-1}$, $g(V):=(V-b)(V-c)^{-1}$, then
\begin{equation*}
  D_s=\sup_{a,b,c}\frac{\EE \big(|f(V)|^s |g(V)|^s\big)}
  {\EE \big(|f(V)|^s\big)\EE\big(|g(V)|^s\big)}<+\infty.
\end{equation*}
We set $\Tilde C_s:= C_s D^2_s$.

In the standard basis of $l^2(\ZZ^d)$ the operator $M(E)+a(E)$ is
given by the matrix $\big(\tau_{m,m'}(E)\big)_{m,m'\in\ZZ^d}$ with
\begin{equation}
  \label{eq:1}
  \tau_{m,m'}(E)=\begin{cases}
    0 & m=m',\\
    b_j(E), & m=m'\pm h_j,\\
    0, & |m-m'|>1.
  \end{cases}
\end{equation}
Let $s\in(0,1/4)$. For any $\Lambda\subset\ZZ^d$ denote
\begin{equation*}
  T^s_{m,\partial\Lambda}(E):=\sum_{n\in W} |\tau_{m,n}(E)|^s,\quad
  m\in\ZZ^d,
  \quad
  W=\begin{cases}
    \ZZ^d\setminus\Lambda, & m\in\Lambda,\\
    \Lambda, & m\notin\Lambda.
  \end{cases}
\end{equation*}
Furthermore, set
\begin{equation*}
  \Theta^s_{\Lambda}(E):=\sum_{m\in\Lambda} T^s_{m,\partial\Lambda}(E),
\end{equation*}
and
\begin{gather*}
  k_\Lambda(m,n;E):=|\tau_{m,n}(E)|^s I_1(m,n)+
  T^s_{m,\partial\Lambda}(E) T^s_{n,\partial\Lambda}(E) \dfrac{\Tilde
    C_s}{\lambda^s} I_2(m,n)\\{}
  +T^s_{m,\partial\Lambda}(E)T^s_{n,\partial\Lambda}(E)
  \Big(\dfrac{\Tilde C_s}{\lambda^s}\Big)^2\Theta^s_\Lambda(E)I_3(u,v),\\
  \intertext{where} I_1(m,n)=\begin{cases}
    1 & m\in\Lambda, n\notin\Lambda,\\
    0, & \text{otherwise,}
  \end{cases},\ 
  I_2(m,n)=\begin{cases}
    1 & m\in\Lambda,\\
    0, & \text{otherwise,}
  \end{cases},\ 
  I_3(m,n)=\begin{cases}
    1 & m\in\Lambda, n\in\Lambda,\\
    0, & \text{otherwise.}
  \end{cases}
\end{gather*}

Theorem~3.2 in \cite{ASFH} and the remark thereafter read in our case as
follows.

\begin{prop}
  \label{prop-fin}
  Take any interval $X\subset\RR$ free of Dirichlet eigenvalues.
  Assume that there exist $\beta\in (0,1)$ and $s\in(0,1/4)$ such that
  for all $E\in X$ there exists a finite $\Lambda\subset\ZZ^d$ with
  $0\in\Lambda$ obeying
  \begin{equation}
    \label{eq-klam}
    \sup_{W\subset\Lambda} \sum_{(m,n)\in
      \Lambda\times(\ZZ^d\setminus\Lambda)}
    \EE\Big(\big|
    \big(M_W(E)-\lambda A_{W,\omega}\big)^{-1}(0,m)
    \big|^s\Big) k_\Lambda(m,n;E)\le \beta.
  \end{equation} 
  Then, there exist $B,c>0$ such that, for any finite
  $\Theta\subset\ZZ^d$, any $m\in\Theta$, and any $E\in X$, one has
  \begin{equation*}
    \sum_{m'\in\Theta} \EE\Big(\big|
    \big(M_\Theta(E)-\lambda A_{\Theta,\omega}\big)^{-1}(m,m')
    \big|^s\Big) e^{c|m-m'|}\le B.
  \end{equation*}
\end{prop}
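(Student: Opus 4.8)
The plan is to deduce Proposition~\ref{prop-fin} directly from \cite[Theorem~3.2 and the remark following it]{ASFH}, by casting $M(E)-\lambda A_\omega$ into the random discrete Schr\"odinger framework of that paper and checking the two a-priori fractional-moment inputs its machinery requires, uniformly for $E\in X$. Concretely, using~\eqref{eq-Mab} and~\eqref{eq:1} I would write $M(E)-\lambda A_\omega=\mathcal{T}(E)-\big(a(E)\,\mathrm{id}+\lambda A_\omega\big)$, where $\mathcal{T}(E)$ is the off-diagonal (nearest-neighbour, hence finite-range) operator with matrix $\big(\tau_{m,m'}(E)\big)$, and $a(E)\,\mathrm{id}+\lambda A_\omega$ is the diagonal operator whose entries $a(E)+\lambda\alpha_\omega(m)$ are i.i.d.; for fixed $E$ their common law has density $\frac{1}{\lambda}\rho\big(\frac{\cdot-a(E)}{\lambda}\big)$, bounded by $R/\lambda$ with $R=\sup\rho$, and supported in $[a(E)+\lambda\alpha_-,a(E)+\lambda\alpha_+]$. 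The finite-volume operator $M_\Lambda(E)-\lambda A_{\Lambda,\omega}=P_\Lambda M(E)\Pi_\Lambda-\lambda A_{\Lambda,\omega}$ is exactly the restriction ``with simple boundary conditions'' of \cite{ASFH}: compressing $M(E)$ to $\Lambda$ merely deletes the hopping bonds crossing $\partial\Lambda$. Since $X$ is free of Dirichlet eigenvalues, $\varphi_j(l_j;\cdot)$ does not vanish on $X$, so $a(E)$, $b_j(E)=1/\varphi_j(l_j;E)$, and the norms $\|\varphi_j(\cdot;E)/\varphi_j(l_j;E)\|$, $\|\phi_j(\cdot;E)/\varphi_j(l_j;E)\|$ are continuous, hence bounded, on $X$; this is what makes all constants below uniform in $E$.

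\emph{A-priori inputs.} Two estimates feed the ASFH scheme. First, the rank-two (Aizenman--Molchanov) bound: conditioning on all couplings except those at $m$ and $m'$, the element $(M_\Lambda(E)-\lambda A_{\Lambda,\omega})^{-1}(m,m')$ with $m\ne m'$ is a ratio of the type controlled by $C_s$, and after rescaling the densities one obtains $\EE\big|(M_\Lambda(E)-\lambda A_{\Lambda,\omega})^{-1}(m,m')\big|^s\le C_s\,\lambda^{-s}$ for all finite $\Lambda$ and all $E\in X$ (here $C_s<\infty$ by \cite{ASFH}, and the shift by $a(E)$ is harmless since $C_s$ is a supremum over all $2\times2$ matrices). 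Second, the decoupling estimate: along the resolvent identity one meets products $|f(V)|^s|g(V)|^s$ with $f(V)=(V-a)^{-1}$, $g(V)=(V-b)(V-c)^{-1}$ sharing the single random variable $V=\lambda\alpha_\omega(n)$, which, by the property recalled just before the proposition, factorize for $s\in(0,1/4)$ up to the constant $D_s<\infty$. With $\tilde C_s:=C_s D_s^2$, every constant entering the ASFH kernel is then under control and, by the first paragraph, uniform over $E\in X$.

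\emph{Recognising the ASFH kernel and concluding.} The kernel $k_\Lambda(m,n;E)$ defined before the proposition is precisely the one produced by iterating twice the resolvent identity around $\partial\Lambda$: the term $|\tau_{m,n}(E)|^s I_1(m,n)$ is the single direct hop across the boundary; the term with $\tilde C_s/\lambda^s$ and $I_2$ is the one-site a-priori bound at the vertex where the identity is applied; and the term with $(\tilde C_s/\lambda^s)^2\Theta^s_\Lambda(E) I_3$ is the second iteration, $\Theta^s_\Lambda(E)$ bounding the total boundary weight. Condition~\eqref{eq-klam} is then the contraction hypothesis of \cite[Theorem~3.2]{ASFH} for the operator with kernel $\EE\big(|(M_W(E)-\lambda A_{W,\omega})^{-1}(0,m)|^s\big)\,k_\Lambda(m,n;E)$, after taking the supremum over $W\subset\Lambda$. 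Invoking that theorem together with the remark following it, which passes from the distinguished box to an arbitrary finite $\Theta$ and keeps the bound uniform on $X$, yields the asserted estimate $\sum_{m'\in\Theta}\EE\big(|(M_\Theta(E)-\lambda A_{\Theta,\omega})^{-1}(m,m')|^s\big)\,e^{c|m-m'|}\le B$.

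\emph{Main obstacle.} There is no new analytic difficulty here; the real work is the bookkeeping of conventions — confirming that $M_\Lambda(E)$ is genuinely the ASFH finite-volume Hamiltonian (cut bonds, not some other boundary condition), tracking the exact power of $\lambda$ produced by the rescaling $\alpha\mapsto\lambda\alpha$ in each a-priori estimate, and checking that all bounds are truly uniform for $E$ in the Dirichlet-eigenvalue-free interval $X$ (where one uses continuity and non-vanishing of $\varphi_j(l_j;\cdot)$). Once these are in place, the statement is a direct transcription of \cite[Theorem~3.2 and the remark thereafter]{ASFH}.
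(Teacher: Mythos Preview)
Your proposal is correct and matches the paper's approach exactly: the paper does not give an independent proof of this proposition but simply introduces it with the sentence ``Theorem~3.2 in \cite{ASFH} and the remark thereafter read in our case as follows,'' and afterwards adds only that the uniformity of $B$ in $E$ follows from Eq.~(3.20) in \cite{ASFH}. Your verification that the setup fits the ASFH framework is more detailed than what the paper provides, but the route is identical; note, however, that the $E$-uniformity of $B,c$ is obtained in the paper directly from the structure of the ASFH bound (their Eq.~(3.20)) and the uniform hypothesis~\eqref{eq-klam}, not from boundedness of $a(E)$, $b_j(E)$ on $X$ (which need not hold, since $X$ is just an interval, not assumed compact).
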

We note that the possibility to choose the constant $B$ independent of
$E$ follows from Eq.~(3.20) in \cite{ASFH}.

It is also important to emphasize that in the sum \eqref{eq-klam} the
coefficients $k_\Lambda(m,n;E)$ are non-zero only if simultaneously
$\dist(n,\Lambda)=1$ and $\dist(m,\ZZ^d\setminus\Lambda)=1$.

For convenience, we formulate proposition~\ref{prop-fin} for the
special case $\Lambda=\{0\}$, which will be used below.

\begin{prop} 
  \label{prop-fin2} 
  Take any $X\subset\RR$ free of Dirichlet eigenvalues.  Assume that
  there exists $\beta\in(0,1)$ and $s\in(0,1/4)$ such that for all
  $E\in X$ one has
  \begin{equation}
          \label{eq-abc}
    c(E) \Big( 1+c(E)\,\dfrac{\Tilde C_s}{\lambda^s}\Big) \int_{\alpha_-}^{\alpha_+}
    \dfrac{1}{\big|\, a(E)+\lambda V\big|^s}\,\rho(dV)<\beta, \quad
    c(E):=2\sum_{j=1}^d\big|b_j(E)\big|^s.
  \end{equation}
  Then there exist $B,c>0$ such that for any finite
  $\Lambda\subset\ZZ^d$, for any $m,m'\in\Lambda$, and any $E\in X$
  there holds
  \begin{equation*}
    \EE\left(\Big|
    \big(M_\Lambda(E)-\lambda A_{\Lambda,\omega}\big)^{-1}(m,m')
    \Big|^s\right) \le B e^{-c|m-m'|}.
  \end{equation*}
\end{prop}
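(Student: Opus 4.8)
The plan is to deduce Proposition~\ref{prop-fin2} from Proposition~\ref{prop-fin} by taking the single-vertex box $\Lambda=\{0\}$ and verifying that in this special case the finite-volume hypothesis \eqref{eq-klam} is nothing but the scalar inequality \eqref{eq-abc}. Thus the whole argument is a bookkeeping computation: I must evaluate the combinatorial weights $T^s_{\cdot,\partial\Lambda}$, $\Theta^s_\Lambda$ and $k_\Lambda(\cdot,\cdot\,;E)$, as well as the restricted resolvent $\big(M_W(E)-\lambda A_{W,\omega}\big)^{-1}$, for $\Lambda=\{0\}$ and $W\subset\{0\}$.

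First I would deal with the resolvent factor. In \eqref{eq-klam} with $\Lambda=\{0\}$ the outer sum runs over pairs $(m,n)$ with $m=0$ and $n\in\ZZ^d\setminus\{0\}$, so the relevant matrix element is $\big(M_W(E)-\lambda A_{W,\omega}\big)^{-1}(0,0)$, which by the usual convention vanishes unless $0\in W$; hence the supremum over $W\subset\{0\}$ is attained at $W=\{0\}$. For that choice $l^2(W)$ is one-dimensional and, by \eqref{eq-Mab}, $M_{\{0\}}(E)-\lambda A_{\{0\},\omega}$ is the scalar $-\big(a(E)+\lambda\alpha_\omega(0)\big)$, so, since $\alpha_\omega(0)$ has density $\rho$ supported in $[\alpha_-,\alpha_+]$,
\[
  \EE\Big(\big|\big(M_{\{0\}}(E)-\lambda A_{\{0\},\omega}\big)^{-1}(0,0)\big|^s\Big)
  =\int_{\alpha_-}^{\alpha_+}\frac{\rho(dV)}{\big|a(E)+\lambda V\big|^s}.
\]
Next I would reduce the combinatorial weights. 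From \eqref{eq:1} one reads off directly $T^s_{0,\partial\{0\}}(E)=2\sum_{j=1}^d|b_j(E)|^s=c(E)$, $\Theta^s_{\{0\}}(E)=c(E)$, and $T^s_{n,\partial\{0\}}(E)=|\tau_{0,n}(E)|^s$ for $n\neq 0$. Since $I_3(0,n)=0$ for $n\neq 0$, the definition of $k_\Lambda$ collapses to $k_{\{0\}}(0,n;E)=|\tau_{0,n}(E)|^s\big(1+c(E)\,\Tilde C_s/\lambda^s\big)$, and summing over $n\in\ZZ^d\setminus\{0\}$ produces a further factor $\sum_{n\neq 0}|\tau_{0,n}(E)|^s=c(E)$. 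Multiplying by the $n$-independent resolvent factor above, the left-hand side of \eqref{eq-klam} for $\Lambda=\{0\}$ becomes exactly the left-hand side of \eqref{eq-abc}.

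It then remains to combine these observations. If \eqref{eq-abc} holds for some $\beta\in(0,1)$, $s\in(0,1/4)$ and all $E\in X$, the previous paragraph shows that the hypothesis of Proposition~\ref{prop-fin} is met with the choice $\Lambda=\{0\}$ (uniformly in $E$), the strict inequality in \eqref{eq-abc} giving a fortiori the non-strict bound by $\beta$ required in \eqref{eq-klam}. Proposition~\ref{prop-fin} then furnishes constants $B,c>0$ — independent of $E\in X$ by the remark following it — such that
\[
  \sum_{m'\in\Theta}\EE\Big(\big|(M_\Theta(E)-\lambda A_{\Theta,\omega})^{-1}(m,m')\big|^s\Big)e^{c|m-m'|}\le B
\]
for every finite $\Theta\subset\ZZ^d$ and every $m\in\Theta$; discarding all but one summand yields $\EE\big(\big|(M_\Lambda(E)-\lambda A_{\Lambda,\omega})^{-1}(m,m')\big|^s\big)\le Be^{-c|m-m'|}$ for all $m,m'\in\Lambda$, which is the assertion. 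I do not anticipate a real obstacle here: the only slightly fussy points are the explicit identification of the $1\times1$ operator $M_{\{0\}}(E)$ with multiplication by $-a(E)$ and the convention that the restricted resolvent vanishes off its index set $W$, both of which are routine.
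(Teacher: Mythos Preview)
Your proposal is correct and is exactly the intended argument: the paper does not give a separate proof but simply declares Proposition~\ref{prop-fin2} to be the special case $\Lambda=\{0\}$ of Proposition~\ref{prop-fin}, and your computation of $T^s_{\cdot,\partial\{0\}}$, $k_{\{0\}}(0,n;E)$ and the scalar resolvent $(M_{\{0\}}(E)-\lambda A_{\{0\},\omega})^{-1}(0,0)=-\big(a(E)+\lambda\alpha_\omega(0)\big)^{-1}$ verifies precisely that \eqref{eq-klam} collapses to \eqref{eq-abc}. The only cosmetic remark is that the $I_3$-term (and hence your evaluation of $\Theta^s_{\{0\}}(E)$) is irrelevant since $I_3(0,n)=0$ for all $n\neq0$, but this does not affect the argument.
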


The condition $s\in(0,1/4)$ is needed for the so-called decoupling
property to hold (see~\cite{Az}). Actually a revision of the proofs
in~\cite{ASFH} shows that the decoupling property is not necessary in
our case as the operators $M(E)$ do not depend on the random
variables, and one can obtain some finite volume criteria with any
power $s\in(0,1)$. 

The following theorem summarizes all the above localization conditions
for quantum graphs.

\begin{theorem}
  \label{thm-loc}
  Let $X\subset\RR$ be free of the Dirichlet eigenvalues and have a
  finite Lebesgue measure. Assume that the assumptions of
  proposition~\ref{prop-fin} are satisfied, then $H_{\lambda,\omega}$
  has only pure point spectrum in $X$.
\end{theorem}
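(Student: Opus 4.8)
The plan is to run the chain of implications assembled in the previous sections: Proposition~\ref{prop-fin} converts the finite--volume condition~\eqref{eq-klam} into the exponential fractional--moment decay~\eqref{eq-Efin}; Proposition~\ref{prop-measure} then turns~\eqref{eq-Efin} into the exponential decay~\eqref{eq-PF} of the averaged upper spectral measures; summation of~\eqref{eq-PF} over the terminal edge produces the summability hypothesis~\eqref{eq-mufin2}; and Corollary~\ref{corol1} (through Theorem~\ref{th-loc}) yields pure point spectrum. The only genuinely new point is that here $X$ is an arbitrary set of finite Lebesgue measure free of the Dirichlet eigenvalues rather than a single segment, so I would first reduce to the segment case.

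For that reduction, recall that the set $D$ of Dirichlet eigenvalues is the union over $j=1,\dots,d$ of the discrete Dirichlet spectra of $-d^2/dt^2+U_j$ on $[0,l_j]$, equivalently $D=\{E:\ \varphi_j(l_j;E)=0\text{ for some }j\}$; it is closed and discrete, so $\RR\setminus D=\bigsqcup_{k}I_k$ is a countable disjoint union of open intervals. Exhausting each $I_k$ by compact segments $[a_{k,n},b_{k,n}]\uparrow I_k$ and setting $X_{k,n}:=X\cap[a_{k,n},b_{k,n}]$, one gets $X=\bigcup_{k,n}X_{k,n}$, a \emph{countable} union. Since a countable union of Borel sets each carrying only pure point spectral measure again carries only pure point spectrum, and since there are only countably many indices $(k,n)$, it suffices to show that for every $(k,n)$ the operator $H_{\lambda,\omega}$ has only pure point spectrum in $X_{k,n}$ almost surely, and then to intersect the corresponding full--probability events. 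Note that each $X_{k,n}$ has finite measure (at most $|X|$) and lies at positive distance from $D$, so when the energy runs over $X_{k,n}$ all of $\varphi_j(l_j;\cdot)$, $b_j$, $a$, $M$, $\gamma$ stay bounded with $\varphi_j(l_j;\cdot)$ bounded away from $0$; in particular the constants $p$ of~\eqref{eq-pepe}, $C$, and $G$ occurring in the proof of Proposition~\ref{prop-measure} are finite on $X_{k,n}$.

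Now fix $(k,n)$ and write $F:=X_{k,n}$. Because the assumptions of Proposition~\ref{prop-fin} are, by hypothesis, satisfied for every $E\in X\supset F$, and because its conclusion is obtained energy by energy with constants depending only on $\beta$, $s$ and the fixed data of the model (uniformity in $E$ being Eq.~(3.20) of~\cite{ASFH}), Proposition~\ref{prop-fin} specialized to $\Theta=\Lambda$ provides $s\in(0,1/4)$ and $B_0,c_0>0$ with $\EE\big(|(M_\Lambda(E)-\lambda A_{\Lambda,\omega})^{-1}(m,m')|^s\big)\le B_0\,e^{-c_0|m-m'|}$ for all finite $\Lambda\subset\ZZ^d$ and all $E\in F$, which is exactly~\eqref{eq-Efin}. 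Applying Proposition~\ref{prop-measure} — whose proof uses only that $F$ has finite Lebesgue measure and that $F\cap D=\emptyset$, hence goes through verbatim with the Borel set $F$ in place of a segment — one gets $B,c>0$ with $\EE\big(\mu^{(m,j),(m',j')}_{\lambda,\omega}(F)\big)\le B\,e^{-c|m-m'|}$ for every pair of edges. Summing over $m'\in\ZZ^d$ and $j'\in\{1,\dots,d\}$ (all terms nonnegative, so Tonelli applies),
\[
\EE\Big(\sum_{m'\in\ZZ^d}\sum_{j'=1}^d\mu^{(m,j),(m',j')}_{\lambda,\omega}(F)\Big)\ \le\ d\,B\sum_{m'\in\ZZ^d}e^{-c|m-m'|}\ <\ \infty
\]
for every fixed edge $(m,j)$, which is precisely~\eqref{eq-mufin2}. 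Corollary~\ref{corol1} (its proof uses only Theorem~\ref{th-loc}, hence applies to $H_{\lambda,\omega}$ unchanged) then yields that $H_{\lambda,\omega}$ has only pure point spectrum in $F=X_{k,n}$ almost surely, and the reduction above completes the proof.

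The only step that is more than bookkeeping is justifying that Proposition~\ref{prop-measure} stays valid when its segment $F$ is replaced by the set $X_{k,n}$: one has to check that the $\delta$--function manipulations, the uniform bound~\eqref{eq-pepe}, the constants $C$ and $G$, and the factor $|F|$ in that proof require nothing beyond $|F|<\infty$ and $F\cap D=\emptyset$ (the latter guaranteed by the construction of $X_{k,n}$ inside $I_k$). Granting this, the remainder is the assembly of the displayed estimates and a countable intersection of almost--sure events.
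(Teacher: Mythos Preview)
Your proof is correct and follows exactly the paper's chain of implications: Proposition~\ref{prop-fin} gives~\eqref{eq-Efin}, Proposition~\ref{prop-measure} gives~\eqref{eq-PF}, summation gives~\eqref{eq-mufin2}, and Corollary~\ref{corol1} concludes. The only difference is that the paper applies Proposition~\ref{prop-measure} directly to $X$ without your segment decomposition, tacitly treating the word ``segment'' in its hypothesis as inessential; your reduction to the $X_{k,n}$ is a legitimate way to be explicit about that point.
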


\begin{proof}
  By proposition~\ref{prop-fin}, there exist $B,c>0$ such that for all
  finite $\Lambda\subset\ZZ^d$ and all $E\in X$ one has $\EE\Big|
  \big(M_\Lambda(E)-\lambda A_{\Lambda,\omega}\big)^{-1}(m,m') \Big|^s
  \le B e^{-c|m-m'|}$. Then, by proposition~\ref{prop-measure}, one
  has $\EE \Big(\mu^{(m,j),(m',j')}(X)\Big)\le B e^{-c|m-m'|}$,
  $B,c>0$.  Hence, for any $(m,j)$ the following bound holds
  \begin{equation*}
    \EE\Big(
    \sum_{m'\in\ZZ^d}\sum_{j'=1}^d
    \mu^{(m,j),(m',j')}(X) \Big) \le Bd \sum_{m'\in\ZZ^d} e^{-c|m'|}<\infty,
  \end{equation*}
  and the spectrum of $H_{\lambda,\omega}$ in $X$ is pure point by
  corollary~\ref{corol1}.
\end{proof}

\section{Strong disorder localization}\label{ss-strong}

Here we are going to exhibit assumptions ensuring that one obtains
dense pure point spectrum in some regions for sufficiently large
constant $\lambda$. To garantee the presence of the dense pure point
spectrum, it is necessary to show the overlapping of the spectrum of
$H_{\lambda,\omega}$ with the region where the assumptions of
proposition~\ref{prop-fin} are fulfilled.

\begin{prop}\label{prop-main}
  For any $E_0\in\RR$ and any $\varepsilon>0$ there exists
  $\lambda_0>0$ such that the spectrum of $H_{\lambda,\omega}$ lying
  in $(-\infty,E_0)$ but outside the $\varepsilon$-neighborhoods of
  the Dirichlet eigenvalues is pure point for all $\lambda>\lambda_0$.
\end{prop}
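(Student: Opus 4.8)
The plan is to verify, uniformly in the energy, the finite-volume criterion of Proposition~\ref{prop-fin2} on the set
\[
  X:=(-\infty,E_0)\setminus N_\varepsilon,\qquad N_\varepsilon:=\bigcup_k(E_k-\varepsilon,E_k+\varepsilon),
\]
where $E_k$ runs over the Dirichlet eigenvalues, with a disorder threshold $\lambda_0$ not depending on $E$, and then to turn the resulting off-diagonal resolvent decay into pure point spectrum through Proposition~\ref{prop-measure} and Corollary~\ref{corol1} (this chain is precisely the content of Theorem~\ref{thm-loc}), after exhausting $X$ by bounded pieces. Fix once and for all some $s\in(0,1/4)$ and set $\beta:=1/2$.

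First I would record two energy estimates. (i) The hopping strength $c(E)=2\sum_{j=1}^d|b_j(E)|^s$ is bounded on $X$: only finitely many Dirichlet eigenvalues lie below $E_0$ since they accumulate only at $+\infty$, and by~\eqref{eq-asymp} one has $b_j(E)=1/\varphi_j(l_j;E)\to0$ as $E\to-\infty$, so $|b_j|\le1$ below some level $-R_0$, while on the compact set $[-R_0,E_0]\setminus N_\varepsilon$ the entire function $\varphi_j(l_j;\cdot)$ is continuous and zero-free, hence $|b_j|$ is bounded there as well. This gives $K:=\sup_{E\in X}c(E)<\infty$, depending only on $\varepsilon$ and $E_0$. (Both cut-offs matter: $c$ explodes near the Dirichlet eigenvalues, and, by~\eqref{eq-asymp}, also near $+\infty$, which is why $E_0$ appears in the statement.) (ii) For all $E\in\RR$ and all $\lambda>0$,
\[
  \int_{\alpha_-}^{\alpha_+}\frac{\rho(dV)}{\big|a(E)+\lambda V\big|^s}\le C_0\,\lambda^{-s},
\]
with $C_0$ depending only on $s$, $\sup\rho$ and $\alpha_+-\alpha_-$: substituting $W=a(E)+\lambda V$ and using that $\rho$ is bounded and supported in $[\alpha_-,\alpha_+]$, the left-hand side is at most $\lambda^{-1}(\sup\rho)\int_J|W|^{-s}\,dW$ with $J$ an interval of length $\lambda(\alpha_+-\alpha_-)$, and $\int_J|W|^{-s}\,dW\le\frac{2^s}{1-s}\big(\lambda(\alpha_+-\alpha_-)\big)^{1-s}$ because $s<1$.

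Combining (i) and (ii), the left-hand side of~\eqref{eq-abc} is, for every $E\in X$, bounded by $K\big(1+K\Tilde C_s\lambda^{-s}\big)C_0\lambda^{-s}$, which tends to $0$ as $\lambda\to+\infty$ uniformly in $E$. Hence there is $\lambda_0=\lambda_0(\varepsilon,E_0)$ such that for all $\lambda>\lambda_0$ the criterion~\eqref{eq-abc} holds with $\beta=1/2$ for every $E\in X$. Fixing such a $\lambda$, Proposition~\ref{prop-fin2} yields $B,c>0$ with
\[
  \EE\Big(\big|\big(M_\Lambda(E)-\lambda A_{\Lambda,\omega}\big)^{-1}(m,m')\big|^s\Big)\le B\,e^{-c|m-m'|}
\]
for every finite $\Lambda\subset\ZZ^d$, all $m,m'\in\Lambda$, and all $E\in X$, i.e.\ assumption~\eqref{eq-Efin} of Proposition~\ref{prop-measure} holds uniformly on every segment contained in $X$.

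It remains to globalize. For $R\in\NN$ put $X_R:=(-R,E_0)\setminus N_\varepsilon$; this is a bounded set of finite Lebesgue measure containing no Dirichlet eigenvalue, on which the hypotheses of Theorem~\ref{thm-loc} hold (take $\Lambda=\{0\}$ in Proposition~\ref{prop-fin}), so $H_{\lambda,\omega}$ has almost surely pure point spectrum in $X_R$. Intersecting these full-measure events over $R\in\NN$ produces $\Omega'$ with $\PP(\Omega')=1$ such that, for $\omega\in\Omega'$, $\mathbf{1}_{X_R}(H_{\lambda,\omega})\HH\subseteq\HH_{pp}(H_{\lambda,\omega})$ for every $R$; since $X_R\uparrow X$ we get $\mathbf{1}_{X_R}(H_{\lambda,\omega})\to\mathbf{1}_X(H_{\lambda,\omega})$ strongly, and, $\HH_{pp}$ being closed, $\mathbf{1}_X(H_{\lambda,\omega})\HH\subseteq\HH_{pp}(H_{\lambda,\omega})$, i.e.\ the spectrum of $H_{\lambda,\omega}$ in $X=(-\infty,E_0)\setminus N_\varepsilon$ is pure point for every $\omega\in\Omega'$. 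I expect the genuine content of the argument to be the uniform estimates (i)--(ii): the crux is that~\eqref{eq-abc} can be made $<\beta$ \emph{uniformly in $E$}, thanks to the gain of a full power $\lambda^{-s}$ in (ii) weighed against the energy-bounded $c(E)$ of (i); the remainder is a routine assembly of the results already established in the paper.
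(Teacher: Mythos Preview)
Your proof is correct and follows essentially the same route as the paper: both invoke the single-site criterion of Proposition~\ref{prop-fin2}, bound $c(E)$ uniformly on $X$ via the asymptotics~\eqref{eq-asymp} and the non-vanishing of $\varphi_j(l_j;\cdot)$ away from the Dirichlet eigenvalues, and estimate the integral in~\eqref{eq-abc} by $C\lambda^{-s}$ using the boundedness of $\rho$ and a change of variables. Your write-up is slightly more complete in that you spell out the passage from compact subsets to all of $X$ via a countable intersection of full-measure events, a step the paper leaves implicit.
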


\begin{proof}
  We use the single point criterium, proposition~\ref{prop-fin2}.
  Denote by $X$ the half-axis $(-\infty,E_0)$ without he
  $\varepsilon$-neighborhoods of the Dirichlet eigenvalues.  Due to
  the asymptotics \eqref{eq-asymp}, one can estimate, for some
  $\delta>0$, $|\varphi_j(l_j;E)|\ge \delta>0$ uniformly for $E\in X$.
  Hence for $E\in X$ one has $|b_j(E)|\le B$, $|c(E)|\le B$ for some
  $B>0$, and, moreover, due to~\eqref{eq-asymp},
  $b_j(E)=O(e^{-\alpha\sqrt{-E}})$, $c(E)=O(e^{-s\alpha\sqrt{-E}})$
  for some $\alpha>0$ as $E\to-\infty$.

  Pick $s\in(0,1/4)$.  As the density $\rho$ is bounded, say, $\rho\le
  R$, one has
  \begin{equation*}
    \begin{split}
      \int_{\alpha_-}^{\alpha_+} \big|\, a(E)+\lambda
      V\big|^{-s}\,\rho(dV) &\le R \int_{\alpha_-}^{\alpha_+} \big|\,
      a(E)+\lambda V\big|^{-s}dV\le\frac
      R\lambda\int_{\alpha_-/\lambda-a(E)} ^{\alpha_+/\lambda-a(E)}
      \big| V\big|^{-s} dV\\&\leq \frac{2R}{\lambda s}
      \left|\frac{\alpha_+-\alpha_-}\lambda\right|^{1-s}
      \leq\frac{C}{\lambda^s}.
    \end{split}
  \end{equation*}
  Therefore,
  \begin{equation}
    \label{eq-celoc}
    c(E) \Big( 1+c(E)\,\dfrac{\Tilde C_s}{\lambda^s}\Big) \int
    \dfrac{1}{\Big|\,a(E)+\lambda V\Big|^s}\,\rho(dV)
    \le \tilde C(E)(\lambda^{-s}+\lambda^{-2s})
  \end{equation}
  where $\tilde C(E)$ is bounded in $X$. Hence, the left-hand side
  of~\eqref{eq-celoc} tends to $0$ uniformly in $X$ as $\lambda$
  becomes large.  The spectrum if $H_{\lambda,\omega}$ in any compact
  subset of $X$ is then pure point by theorem~\ref{th-loc}.
\end{proof}

Proposition~\ref{prop-main} does not guarantee that there is some
spectrum in the set considered. To show the presence of a dense point
spectrum we use the estimates of subsection~\ref{ss-ran} to obtain

\begin{theorem}\label{th1}
  Assume that $\alpha_-<0$. Then, for any $\varepsilon>0$, there
  exists $\lambda_0>0$ such that the spectrum of $H_{\lambda,\omega}$
  in $(-\infty,\inf\spec H_0-\varepsilon)$ is dense pure point for
  $\lambda>\lambda_0$.
\end{theorem}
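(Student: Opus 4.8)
The plan is to combine two facts: (i) by Proposition~\ref{prop-main}, for any $E_0$ and any $\varepsilon>0$ the spectrum of $H_{\lambda,\omega}$ in $(-\infty,E_0)$ lying outside the $\varepsilon$-neighborhoods of the Dirichlet eigenvalues is pure point once $\lambda$ is large enough; and (ii) the characteristic equation \eqref{eq-specb} together with the asymptotics \eqref{eq-asymp} shows that when $\alpha_-<0$ there is genuinely some almost sure spectrum at energies $E\to-\infty$, in fact occupying an interval reaching $-\infty$. The two facts must be reconciled in the window $(-\infty,\inf\spec H_0-\varepsilon)$: here one must also check that this window contains no Dirichlet eigenvalues near which Proposition~\ref{prop-main} would fail to apply, and that the dense point spectrum predicted in the strong-disorder regime actually lies in this window.

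First I would record that, since $\inf\spec H_0$ is by construction $\ge$ the bottom of the Dirichlet spectrum of the edge operators $-d^2/dt^2+U_j$ (indeed $\spec H^0$ is the union of those Dirichlet spectra, and the Kirchhoff operator $H_0$ has $\inf\spec H_0\le\inf\spec H^0$), the half-line $(-\infty,\inf\spec H_0-\varepsilon)$ is separated from all Dirichlet eigenvalues — more precisely, there is $\varepsilon'>0$ so that $(-\infty,\inf\spec H_0-\varepsilon)$ avoids the $\varepsilon'$-neighborhoods of $\spec H^0$. Hence Proposition~\ref{prop-main} applied with $E_0=\inf\spec H_0-\varepsilon$ and this $\varepsilon'$ yields a $\lambda_0$ such that for all $\lambda>\lambda_0$ the spectrum of $H_{\lambda,\omega}$ in $(-\infty,\inf\spec H_0-\varepsilon)$ is pure point. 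It remains to show this set actually contains spectrum, and that the spectrum there is dense.

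For the presence (and density) of spectrum, I would use \eqref{eq-specb} with the asymptotics from subsection~\ref{ss-ran}. As $E\to-\infty$ one has $b_j(E)=O(e^{-\alpha\sqrt{-E}})\to0$ and $a(E)\sim 2d\sqrt{-E}\to+\infty$, so $2\sum_j|b_j(E)|-a(E)-\lambda\alpha_-\sim -2d\sqrt{-E}-\lambda\alpha_-$ and $2\sum_j|b_j(E)|+a(E)+\lambda\alpha_+\sim 2d\sqrt{-E}+\lambda\alpha_+$. Since $\alpha_-<0$, the first factor is $\ge0$ precisely when $\sqrt{-E}\le -\lambda\alpha_-/(2d)$ (up to exponentially small corrections), while the second factor is eventually positive; so \eqref{eq-specb} holds on an interval of the form $(-c\lambda^2+o(\lambda^2),\,E_\lambda)$ for suitable $E_\lambda$, which for $\lambda$ large is a nonempty interval contained in $(-\infty,\inf\spec H_0-\varepsilon)$. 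Thus $H_{\lambda,\omega}$ has almost surely nonempty spectrum in that window, and being a nondegenerate interval it is a set of spectrum without isolated points; combined with pure point type, this gives a dense point spectrum there. Enlarging $\lambda_0$ if necessary so that both the localization condition and the nonemptiness hold simultaneously completes the argument.

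The main obstacle I anticipate is the bookkeeping around the spectral edges and the interplay of the two $\lambda$-dependent thresholds: one must verify that the interval of spectrum produced by \eqref{eq-specb} is not pushed above $\inf\spec H_0-\varepsilon$ nor entirely swallowed by a Dirichlet-eigenvalue neighborhood, and that $\inf\spec H_0$ itself does not depend on $\lambda$ — which is exactly the point noted in subsection~\ref{ss-ran}, namely that when $0\in[\alpha_-,\alpha_+]$ the spectrum always contains the $\lambda$-independent piece $\spec H_0$. Since here $\alpha_-<0$, whether $0\in[\alpha_-,\alpha_+]$ is automatic only if $\alpha_+\ge0$; if $\alpha_+<0$ one should instead simply define $\inf\spec H_0$ via $H_0$ directly (it is still $\lambda$-independent) and argue as above. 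Care with this case distinction is the only genuinely delicate part; everything else is a direct invocation of Proposition~\ref{prop-main} and the elementary asymptotic analysis of \eqref{eq-specb}.
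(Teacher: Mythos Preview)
Your proposal is correct and follows the same route as the paper's (very terse) proof: invoke Proposition~\ref{prop-main} for the pure point property, and the asymptotic discussion of subsection~\ref{ss-ran} for the presence of spectrum in the window. The worry in your final paragraph is unnecessary---$H_0$ is by definition the $\lambda$-independent operator with zero coupling constants regardless of whether $0\in[\alpha_-,\alpha_+]$---and your observation that $\inf\spec H_0\le\inf\spec H^0$ (so the window avoids all Dirichlet eigenvalues) is a useful detail the paper leaves implicit.
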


\begin{theorem}\label{th2}
  Let $0\in[\alpha_-,\alpha_+]$. Then, for any $E_0>\inf\spec H^0$ and
  any $\varepsilon>0$, there exists $\lambda_0>0$ such that the
  spectrum of $H_{\lambda,\omega}$ lying in $(-\infty,E_0)$ but
  outside the $\varepsilon$-neighborhoods of the Dirichlet eigenvalues
  is dense pure point for all $\lambda>\lambda_0$.
\end{theorem}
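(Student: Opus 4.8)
The plan is to deduce Theorem~\ref{th2} from Proposition~\ref{prop-main}, which already supplies the pure point part, by adjoining the information --- coming from the analysis of the almost sure spectrum following~\eqref{eq-specb} --- that under the hypothesis $0\in[\alpha_-,\alpha_+]$ there is genuinely (a dense set of) spectrum in the region considered. The structure mirrors the proof of Theorem~\ref{th1}.

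So, fixing $E_0>\inf\spec H^0$ and $\varepsilon>0$, put $X:=(-\infty,E_0)\setminus N_\varepsilon$, where $N_\varepsilon$ is the union of the $\varepsilon$-neighbourhoods of the Dirichlet eigenvalues. First I would apply Proposition~\ref{prop-main} with these $E_0$ and $\varepsilon$ to obtain $\lambda_0>0$ such that, for $\lambda>\lambda_0$, the spectrum of $H_{\lambda,\omega}$ in every compact subset of $X$ is almost surely pure point. Exhausting $X$ by countably many compact intervals and intersecting the corresponding full measure events yields a single full measure set on which $\spec H_{\lambda,\omega}\cap X$ carries neither absolutely continuous nor singular continuous spectrum. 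By the general facts on ergodic operators recalled in subsection~\ref{ss-ran} there is a deterministic closed set $\Sigma_\lambda$ with $\spec H_{\lambda,\omega}=\Sigma_\lambda$ almost surely; once the spectrum in $X$ is almost surely pure point one has $\spec_{\mathrm{pp}}H_{\lambda,\omega}\cap X=\Sigma_\lambda\cap X$ almost surely, and since $\spec_{\mathrm{pp}}$ is the closure of the set of eigenvalues, the eigenvalues are almost surely dense in $\Sigma_\lambda\cap X$. Hence it only remains to check that $\Sigma_\lambda\cap X$ is nondegenerate, i.e. contains a nonempty interval.

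This is exactly where $0\in[\alpha_-,\alpha_+]$ is used. As observed after~\eqref{eq-specb}, for every $\lambda$ the value $V=0$ is admissible, so~\eqref{eq-specb} holds on the $\lambda$-independent set $\{E:\ \big(2\sum_{j}|b_j(E)|-a(E)\big)\big(2\sum_{j}|b_j(E)|+a(E)\big)\ge0\}=\spec H_0$; thus $\Sigma_\lambda\supseteq\spec H_0$ for all $\lambda$. Moreover $\inf\spec H_0\le\inf\spec H^0$: the quadratic form of $H_0$ is $f\mapsto\sum_{m,j}\int_0^{l_j}\big(|f_{m,j}'|^2+U_j|f_{m,j}|^2\big)$ on the continuous-at-vertices functions of $\bigoplus_{m,j}H^1([0,l_j])$, while the form of $H^0$ is the same expression restricted to the smaller domain of functions vanishing at all vertices, so the variational infimum over the larger set is no larger. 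Consequently $\inf\spec H_0\le\inf\spec H^0<E_0$, and since $H_0$ is $\ZZ^d$-periodic its spectrum is a locally finite union of bands whose lowest one, starting at $\inf\spec H_0$, is nondegenerate (the band edge $\inf\spec H_0$ is dispersive, carrying e.g. the constant generalised eigenfunction when $U_j\equiv0$). As all Dirichlet eigenvalues lie in $[\inf\spec H^0,+\infty)$, for $\varepsilon$ small enough a subinterval of this band survives in $X$; hence $\Sigma_\lambda\cap X$ contains a nonempty interval, and by the preceding paragraph the spectrum of $H_{\lambda,\omega}$ in $X$ is almost surely dense pure point for $\lambda>\lambda_0$.

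I expect the last point --- exhibiting a genuine band of $\spec H_0$ inside $X$ --- to be the only real obstacle. For small $\varepsilon$ the form comparison at the bottom of $\spec H_0$ suffices; for larger $\varepsilon$ one must instead locate a higher band of $\spec H_0$ at distance $>\varepsilon$ from the Dirichlet eigenvalues, which forces one to examine the functions $2\sum_j|b_j(E)|\pm a(E)$ near a Dirichlet eigenvalue $\mu$: there $b_j(E)=1/\varphi_j(l_j;E)$ blows up, $a(E)\sim\eta_j(\mu)\,b_j(E)$, the leading terms cancel, and the sign of the surviving subleading terms --- governed by $\eta_j$ and $\partial_E\varphi_j(l_j;\cdot)$ at $\mu$ --- decides whether a band abuts $\mu$ from below or from above; in the non-isotropic case several edge types with possibly interleaved Dirichlet spectra must be tracked at once. (When $\alpha_-<0$ one may instead borrow the band near $-\lambda^2$ produced by the analysis of subsection~\ref{ss-ran}, exactly as in Theorem~\ref{th1}; and in the equilateral case $\spec H_0=[0,+\infty)$, so the difficulty disappears entirely.)
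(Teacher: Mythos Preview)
Your approach matches the paper's: the authors' entire proof is the single sentence that both theorems follow from Proposition~\ref{prop-main} combined with the discussion of subsection~\ref{ss-ran} for the non-emptiness of the spectrum in the set considered. The detail you add on non-emptiness, and the residual worry you flag about large~$\varepsilon$, already go beyond what the paper itself supplies; the paper simply asserts non-empty intersection and does not work out that point further.
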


Both theorems~\ref{th1} and \ref{th2} are direct consequences of
proposition~\ref{prop-main}.  The discussion of
subsection~\ref{ss-ran} shows that the intersection of the spectrum of
$H_{\lambda,\omega}$ with the sets considered is non-empty for large
$\lambda$.

In theorems~\ref{th1} and~\ref{th2}, we only stated the localized
spectrum. Clearly by virtue of remark \ref{rem3}, we get also exponential decay of the
eigenfunctions and dynamical localization.

We were not able to study the effect of the strong disorder in neighborhoods on the Dirichlet eigenvalues.
The reason is that in these neighborhoods the expression $a(E)$ in the single point criterion, proposition~\ref{prop-fin2},
becomes unbounded, and hence even large $\lambda$ gives no possibility to control the value of the integral
in \eqref{eq-abc}. Moreover, if both the constants $\alpha_-$ and $\alpha_+$ are positive, then, by discussion
of subsection~\ref{ss-ran}, the whole spectrum is concentrated in these neighborhoods, so the above theorems
do not provide any localilization result in this case. In the next section we will be
able to fill this gap at least partially and to prove localization near the spectral edges independently
of their location.

\section{Localization at band edges}\label{ss-lif}

Here we are going to show the presence of the dense pure point
spectrum at the edges of the spectrum of $H_\omega$.

The starting point will the following simple observation.
\begin{prop} \label{prop-bd} Let $E_0\in\spec H_\omega\setminus H^0$.
  If for some $\varepsilon>0$ one has
  $(E_0-\varepsilon,E_0)\notin\spec H_\omega$ or
  $(E_0,E_0+\varepsilon)\notin\spec H_\omega$, then either
  $\inf\spec\big( M(E_0)-A_\omega\big)=0$ or $\sup\spec\big(
  M(E_0)-A_\omega\big)=0$.  In other words, if $E_0\notin\spec H^0$ is
  at the border of the spectrum of $H_\omega$, then $0$ is a border of
  the spectrum of $M(E_0)-A_\omega$.
\end{prop}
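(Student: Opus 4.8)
The plan is to reduce the statement to the explicit description of the almost sure spectra obtained in subsection~\ref{ss-ran} together with an elementary continuity argument; no fine spectral analysis is needed. Write $\Sigma$ for the almost sure spectrum of $H_\omega$ and, for $E\notin\spec H^0$, recall from~\eqref{eq-mea} that the almost sure spectrum $\Sigma_M(E)$ of $M(E)-A_\omega$ is the interval
\[
  \Sigma_M(E)=\Big[-2\sum_{j=1}^d|b_j(E)|-a(E)-\alpha_+,\ \ 2\sum_{j=1}^d|b_j(E)|-a(E)-\alpha_-\Big],
\]
so that both endpoints $\inf\Sigma_M(E)$ and $\sup\Sigma_M(E)$ are continuous functions of $E$ on $\RR\setminus\spec H^0$ (the functions $b_j$ and $a$ being real-analytic there). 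Recall also, from Krein's formula~\eqref{eq-krein} and~\eqref{eq-speca}, that for $E\notin\spec H^0$ one has $E\in\Sigma$ if and only if $0\in\Sigma_M(E)$, i.e. if and only if $\inf\Sigma_M(E)\le 0\le\sup\Sigma_M(E)$.

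First I would use that $\spec H^0$ is closed and $E_0\notin\spec H^0$ to shrink the given $\varepsilon$, if necessary, so that $I_0:=(E_0-\varepsilon,E_0+\varepsilon)$ is disjoint from $\spec H^0$; then $\inf\Sigma_M$ and $\sup\Sigma_M$ are continuous on $I_0$. Assume, without loss of generality, $(E_0-\varepsilon,E_0)\cap\Sigma=\emptyset$ (the case of a right neighbourhood is identical). By the characterisation above this means that for every $E\in(E_0-\varepsilon,E_0)$ one has $\inf\Sigma_M(E)>0$ or $\sup\Sigma_M(E)<0$. On the other hand $E_0\in\Sigma\setminus\spec H^0$ forces $\inf\Sigma_M(E_0)\le 0\le\sup\Sigma_M(E_0)$.

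Now argue by contradiction: suppose $0$ lies in the interior of $\Sigma_M(E_0)$, i.e. $\inf\Sigma_M(E_0)<0<\sup\Sigma_M(E_0)$. By continuity on $I_0$ there is $\varepsilon'\in(0,\varepsilon]$ with $\inf\Sigma_M(E)<0<\sup\Sigma_M(E)$ for all $E\in(E_0-\varepsilon',E_0+\varepsilon')$; choosing $E\in(E_0-\varepsilon',E_0)$ contradicts the dichotomy of the previous paragraph. Hence $\inf\Sigma_M(E_0)=0$ or $\sup\Sigma_M(E_0)=0$; since $\Sigma_M(E_0)=\spec\big(M(E_0)-A_\omega\big)$ almost surely, this is precisely the claim.

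The only point requiring care is the distinction between an individual realisation and the almost sure objects: the assertion that $0$ is a border of $\spec\big(M(E_0)-A_\omega\big)$ is to be read almost surely, and it is essential that $\spec\big(M(E)-A_\omega\big)$ is genuinely an interval. Indeed, a naive perturbation argument treating $M(E)-A_\omega=M(E_0)-A_\omega+\big(M(E)-M(E_0)\big)$ with $\|M(E)-M(E_0)\|$ small would only yield that the spectrum of $M(E)-A_\omega$ comes arbitrarily close to $0$ for $E$ near $E_0$, not that it contains $0$; the interval structure from~\eqref{eq-mea}, i.e. ergodicity, is what closes this gap. Everything else is continuity of the two edge functions away from the Dirichlet spectrum.
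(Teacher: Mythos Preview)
Your proof is correct and follows essentially the same route as the paper: both use that $\Sigma_M(E)$ is an interval with continuous endpoints away from $\spec H^0$, then argue by contradiction that if $0$ were strictly interior to $\Sigma_M(E_0)$ the same would hold on a full neighbourhood of $E_0$, contradicting the gap assumption. Your version is in fact slightly more careful (explicitly shrinking $\varepsilon$ to stay away from $\spec H^0$ and flagging the almost-sure interpretation), but the argument is the same.
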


\begin{proof}
  As~\eqref{eq-mea} shows, the spectrum of $M(E)-A_\omega$ is a
  segment $[m_-(E),m_+(E)]$ whose ends $m_-(E):=\inf \Sigma_M(E)$ and
  $m_+(E):=\sup\Sigma_M(E)$ depend continuously on $E$.  As
  $E_0\in\spec H_\omega$, one has necessarily $0\in\Sigma_M(E)$, i.e.
  $m_-(E_0)m_+(E_0)\le 0$.  If one had $m_-(E_0)m_+(E_0)<0$, i.e.
  $m_-(E_0)<0$ and $m_-(E_0)>0$, then the inequality $m_-(E)m_+(E)<0$
  would hold also for $E\in(E_0-\varepsilon,E_0+\varepsilon)$ with
  some $\varepsilon>0$. But this would mean that
  $(E_0-\varepsilon,E_0+\varepsilon)\subset\spec H_\omega$, which
  contradicts the assumptions.  Therefore, the only possibility is
  $m_-(E_0)\cdot m_+(E_0)=0$.
\end{proof}

\begin{theorem}
  Let $E_0\notin\spec H^0$ be at the border of the spectrum of
  $H_\omega$.  Then the spectrum of $H_\omega$ in some neighborhood of
  $E_0$ is pure point almost surely.
\end{theorem}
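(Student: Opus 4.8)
The plan is to reduce the statement to the finite-volume criterion of proposition~\ref{prop-fin} (hence to proposition~\ref{prop-measure}) applied near $E_0$, by exploiting the fact that $E_0$ sits at a band edge of $M(E_0)-A_\omega$ and that near such an edge the Green function of the discrete random operator decays, thanks to a Lifshitz-tails estimate for the integrated density of states. First I would invoke proposition~\ref{prop-bd}: since $E_0\notin\spec H^0$ is at the border of $\spec H_\omega$, one of $\inf\spec(M(E_0)-A_\omega)=0$ or $\sup\spec(M(E_0)-A_\omega)=0$ holds; without loss of generality assume the latter, so $0$ is the top of the almost sure spectrum of the discrete operator $M(E_0)-A_\omega$. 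By the analyticity of $E\mapsto M(E)$ and the explicit formula \eqref{eq-Mab}, together with \eqref{eq-mea}, the whole band $[m_-(E),m_+(E)]$ moves continuously (in fact real-analytically) with $E$ on a neighborhood $I$ of $E_0$ avoiding $\spec H^0$.

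Next I would fix a small interval $X=(E_0-\varepsilon,E_0+\varepsilon)$ (intersected with $\spec H_\omega$; on the complement there is nothing to prove) free of Dirichlet eigenvalues, and for each $E\in X$ consider the energy $0$ relative to the random operator $M(E)-A_\omega$. The key point is that, because $m_+(E)\to 0$ as $E\to E_0$ and $m_+$ is monotone/analytic in the relevant regime, for $E$ in $X$ the value $0$ lies within an $O(\varepsilon)$-distance of the band edge $m_+(E)$ of $M(E)-A_\omega$ — i.e. we are probing the discrete operator at its own spectral edge. I would then apply the Lifshitz asymptotics for the density of states of the discrete Anderson-type operator $M(E)-A_\omega=M(E)-\{\alpha_\omega(m)\}$ (the same mechanism alluded to in the introduction and used in section~\ref{ss-lif}): near the bottom/top of its spectrum the IDS is exponentially thin, which — via the multiscale-analysis output or the Aizenman--Molchanov machinery as packaged in proposition~\ref{prop-fin} — yields, for $E\in X$ with $\varepsilon$ small enough, the finite-volume bound \eqref{eq-klam} with some $\beta<1$ and $s\in(0,1/4)$. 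Concretely, one checks that the initial-length-scale estimate holds at the band edge (small probability that $\spec(M_\Lambda(E)-\lambda A_{\Lambda,\omega})$ comes within $\delta$ of $0$) and feeds it into proposition~\ref{prop-fin}.

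Granting the finite-volume criterion on $X$, proposition~\ref{prop-fin} gives the fractional-moment bound $\EE|(M_\Lambda(E)-A_{\Lambda,\omega})^{-1}(m,m')|^s\le A e^{-a|m-m'|}$ uniformly for $E\in X$ and finite $\Lambda$, which is precisely hypothesis \eqref{eq-Efin} of proposition~\ref{prop-measure}. That proposition then delivers $\EE(\mu_\omega^{(m,j),(m',j')}(X))\le B e^{-c|m-m'|}$, so that $\EE(\sum_{m',j'}\mu_\omega^{(m,j),(m',j')}(X))<\infty$, and corollary~\ref{corol1} shows $H_\omega$ has pure point spectrum in $X$ almost surely. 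By remark~\ref{rem3} one also gets exponential decay of eigenfunctions and dynamical localization on $X$, though only pure point spectrum was asked.

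The main obstacle is establishing the initial-scale (Lifshitz-type) estimate for the discrete operator $M(E)-A_\omega$ at its band edge, uniformly in $E$ over the small interval $X$: one must show that the probability that a box Hamiltonian $M_\Lambda(E)-\lambda A_{\Lambda,\omega}$ has an eigenvalue within $\delta$ of $0$ is super-polynomially small in $|\Lambda|$. This requires knowing that the edge $m_+(E)$ is a genuine fluctuation edge for the i.i.d.\ potential $\{\alpha_\omega(m)\}$ — which follows from the boundedness of the density $\rho$ and the structure \eqref{eq-Mab} of $M(E)$ (a bounded, banded, deterministic operator), so that standard Lifshitz-tail arguments for discrete Anderson models apply — and then arguing that the estimate is uniform in $E\in X$ by the continuity of the band functions and a compactness/shrinking argument on $\varepsilon$. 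Once this input is in place, everything else is a bookkeeping application of the already-established propositions~\ref{prop-fin}, \ref{prop-measure} and corollary~\ref{corol1}.
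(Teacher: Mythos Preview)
Your strategy is the paper's: proposition~\ref{prop-bd} places $0$ at an edge of $\spec(M(E_0)-A_\omega)$, Lifshitz tails for this discrete operator give an initial-scale estimate, and one verifies the finite-volume criterion of proposition~\ref{prop-fin} to conclude via proposition~\ref{prop-measure} and corollary~\ref{corol1}. Two technical points deserve sharpening, as the paper handles them more concretely than you indicate.

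First, the $E$-uniformity you flag as the main obstacle is dealt with by a one-line perturbation, not by proving Lifshitz tails for every $M(E)-A_\omega$. Writing $M(E)=M(E_0)+(E-E_0)B(E)$ with $\|B(E)\|\le D$ near $E_0$, if $\inf\spec(M_W(E)-A_{W,\omega})\le\varepsilon$ for some $|E-E_0|<\varepsilon$ then already $\inf\spec(M_W(E_0)-A_{W,\omega})\le(D+1)\varepsilon$. Thus the $E$-dependent bad event is contained in a single-energy bad event at $E_0$, and Lifshitz tails (together with the bound $\PP(\Omega(\varepsilon,\Lambda_N))\le CN^d\cN(\varepsilon)$) are needed only for $M(E_0)-A_\omega$.

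Second, the passage from the initial-scale estimate to \eqref{eq-klam} is not just ``feeding it into proposition~\ref{prop-fin}'': that proposition takes \eqref{eq-klam} as \emph{hypothesis}, not the IDS estimate. The paper supplies the bridge by splitting $\EE\big|(M_W(E)-A_{W,\omega})^{-1}(0,m)\big|^s$ according to whether $\omega$ lies in the bad set. On the good set one has $\inf\spec(M_W(E)-A_{W,\omega})>\varepsilon$, and a Combes--Thomas estimate gives deterministic exponential decay of the Green function; on the bad set one combines the a priori bound $\EE\big|(M_W(E)-A_{W,\omega})^{-1}(m,m')\big|^{s'}\le C_{s'}$ (for some $s'>s$) with H\"older and the Lifshitz bound on the probability. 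Summing over the $O(N^{2d})$ relevant boundary pairs in \eqref{eq-klam} and choosing $N\sim\varepsilon^{-1}$ drives the sum below any fixed $\beta<1$. Your outline omits this Combes--Thomas/splitting step, which is the actual analytic content; once it is inserted, the rest of your argument is correct and coincides with the paper's.
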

In the present case, remark~\ref{rem3} gives also exponential decay of
the eigenfunctions and dynamical localization.

\begin{proof}
  Proposition~\ref{prop-bd} shows that $0$ is an edge of the spectrum
  of $M(E_0)-A_\omega$.  To be definite, we consider only the case
  $\inf \spec\big(M(E_0)-A_\omega\big)=0$; the other case can be
  studied in the same way. Note that due to the variational principle
  one has $M_W(E_0)-A_{W,\omega}\ge 0$ for any $W\subset\ZZ^d$.

  Let us do first some preparations.  For any $W\subset\ZZ^d$ and
  $\varepsilon>0$ consider the following subset of $\Omega$:
  \begin{equation*}
    \Omega(\varepsilon,W):=\big\{\omega\in\Omega:\,
    \inf \spec \big(M_W(E_0)-A_{W,\omega}\big)\le\varepsilon
    \big\}.
  \end{equation*}
  Clearly, by the variational principle one has
  $\Omega(\varepsilon,W)\subset \Omega(\varepsilon,W')$ if $W\subset
  W'$.

  Let $\cN(\lambda)$ by the integrated density of states corresponding
  to $M(E_0)-A_\omega$.  Denote $\Lambda_N:=\{m\in\ZZ^d:\,\max_j
  |m_j|\le N\}$, $N\in\NN$.  It is known~\cite{KW} that with some
  $C>0$ one has
  \begin{equation*}
    \PP \big(\Omega(\varepsilon,\Lambda_N)\big)\le C N^d \cN(\varepsilon)
    \quad\text{for any } N\ge 1.
  \end{equation*}
  At the same time, one has the Lifshitz asymptotics for
  $\cN(\varepsilon)$, i.e. there exists $\varepsilon_0>0$ and $\eta>0$
  such that $\cN(\varepsilon)\le e^{-\varepsilon^{-\eta}}$,
  $\varepsilon\in(0,\varepsilon_0)$. Indeed, by~\eqref{eq:1}, the
  Fourier symbol of $M(E_0)$ is of the form $\D\sum_{j=1}^d b_j\cos \theta_j
  -a$ with $b_j\ne 0$, hence, one can apply to $M(E_0)+A_\omega$  the techniques of
  \cite{FK1} to obtain that $\D\log|\log\cN(\varepsilon)|=-\frac
  d2\log\varepsilon(1+o(1))$ when $\varepsilon\to0+$.

  For any finite $W\subset\ZZ^d$ and $\varepsilon>0$ denote
  \begin{equation*}
    \Tilde\Omega(\varepsilon,W):= \big\{\omega\in\Omega:\,
    \inf\spec\big(M_W(E)-A_{W,\omega}\big)\le \varepsilon\text{ for
      some } E,\, |E-E_0|<\varepsilon \big\}.
  \end{equation*}
  Note that the condition $\inf\spec \big(M_W(E)-A_{W,\omega}\big)\le
  \varepsilon$ is equivalent to the existence of a non-zero $\xi_E \in
  l^2(W)$ with
  \begin{equation}
    \label{eq-mwe}
    \big\langle\xi_E,\big(M_W(E)-A_{W,\omega}\big)\xi_E\big\rangle\le
    \varepsilon\cdot \|\xi_E\|^2.
  \end{equation}
  Representing $M(E)=M(E_0)+(E-E_0)B(E)$, where $\|B(E)\|\le D$ for
  some $D>0$ in a neighborhood of $E_0$, one immediately sees that
  \eqref{eq-mwe} implies
  $\big\langle\xi_E,\big(M_W(E_0)-A_{W,\omega}\big)\xi_E\big\rangle\le
  (D+1)\varepsilon\|\xi_E\|^2$, which means
  $\inf\spec\big(M_W(E_0)-A_{W,\omega}\big)\le (D+1)\varepsilon$.
  This shows the inclusion $\Tilde\Omega(\varepsilon,W)\subset
  \Omega((D+1)\varepsilon,W)$.

  With the above preparations we just need to repeat the basic steps
  from~\cite[Section~2]{FK}.  It is sufficient to show that there
  exists a neighborhood $X$ of $E_0$ where the assumptions of
  proposition~\ref{prop-fin} are satisfied for $\Lambda=\Lambda_N$
  with a suitable $N$.

  Let us fix some $s\in(0,1/4)$.  Consider any $W\subset\Lambda_N$. As
  shown above, one has
  $\Tilde\Omega(\varepsilon,W)\subset\Omega((D+1)\varepsilon,
  W)\subset \Omega((D+1)\varepsilon,\Lambda_N)$.  Subsequently, for
  $\varepsilon\in (0,\varepsilon')$ with some $\varepsilon'>0$, one
  has
  \begin{equation}
    \label{eq-PP1}
    \PP\big(\Tilde\Omega(\varepsilon,W)\big)\le
    \PP\big(\Omega((D+1)\varepsilon,\Lambda_N)\big)\le 
    C N^d e^{-\varepsilon^{-\eta}},\quad \eta>0.
  \end{equation}
  For $\omega\notin \Tilde\Omega(\varepsilon,W)$ one can use the
  Combes-Thomas estimates, see e.g. \cite[lemma 6.1]{FK}, which gives
  that for some $C', r>0$ one has
  \begin{equation}
    \label{eq-combes}
    \big|\big(M_W(E)-A_{W,\omega}\big)^{-1}(m,m')\big|\le C' e^{-r|m-m'|}.
  \end{equation}
  Eq.~(6.1) in~\cite{FK} shows that the constants $C'$ and $r$ can be
  chosen independent of $W$ as in our case $\inf\spec
  \big(M_W(E)-A_{W,\omega})>\varepsilon$.

  Take any $s'\in(s,1)$, then for any $E$ with $|E-E_0|<\varepsilon$
  one has also an a priori estimate
  \begin{equation}
    \label{eq-emw}
    \EE\Big(\big|\big(M_W(E)-A_{W,\omega}\big)^{-1}(m,m')\big|^{s'}\Big)\le C_{s'},
  \end{equation}
  see \cite[lemma 2.1]{ASFH}.

  Now we have
  \begin{equation*}
    \begin{split}
      \EE\Big(\big|\big(M_W(E)-A_{W,\omega}\big)^{-1}(m,m')\big|^s\Big)
      &=\EE\Big(\big|\big(M_W(E)-A_{W,\omega}\big)^{-1}(m,m')\big|^s
      \mathbf{1}_{\omega\in\Tilde\Omega(\varepsilon,W)}\Big)\\
      &\hskip2cm+\EE\Big(\big|\big(M_W(E)-A_{W,\omega}\big)^{-1}(m,m')\big|^s
      \mathbf{1}_{\omega\notin\Tilde\Omega(\varepsilon,W)}\Big).
    \end{split}
  \end{equation*}
  Using~\eqref{eq-combes} we obtain easily
  \begin{equation*}
    \EE\Big(\big|\big(M_W(E)-A_{W,\omega}\big)^{-1}(m,m')\big|^s
    \mathbf{1}_{\omega\notin\Tilde\Omega(\varepsilon,W)}\Big) 
    \le B e^{-b|m-m'|},\quad B,b>0.
  \end{equation*}
  Using the H{\"o}lder inequality, \eqref{eq-PP1} and \eqref{eq-emw},
  for some $C'>0$ and $\gamma>0$, one has
  \begin{equation*}
    \begin{split}
      \EE\Big(\big|\big(M_W(E)-A_{W,\omega}\big)^{-1}(m,m')\big|^s
      \mathbf{1}_{\omega\notin\Tilde\Omega(\varepsilon,W)}\Big)
      &\le\Big(\EE\big|\big(M_W(E)-A_{W,\omega}\big)^{-1}(m,m')\big|^s\Big)^{s/s'}
      \PP(\Tilde\Omega(\varepsilon,W))^{(s'-s)/s}\\
      &\le C' N^d e^{-\varepsilon^{-\gamma}}.
    \end{split}
  \end{equation*}
  Finally,
  \begin{equation}
    \label{eq-EE}
    \EE\big|\big(M_W(E)-A_{W,\omega}\big)^{-1}(m,m')\big|^s\le  B
    e^{-b|m-m'|} + C' N^d e^{-\varepsilon^{-\gamma}}.
  \end{equation}
  Now let us estimate the sum \eqref{eq-klam}. We emphasize again that
  the coefficients $k_{\Lambda_N}(m,n;E)$ in this sum are non-zero
  only if simultaneously $\dist(n,\Lambda_N)=1$ and
  $\dist(m,\ZZ^d\setminus\Lambda_N)=1$. Moreover, the non-zero terms
  are uniformly bounded in a neighborhood of $E_0$,
  $k_{\Lambda_N}(m,n;E)\le K$, $K>0$.  Therefore, using \eqref{eq-EE},
  \begin{equation*}
    \begin{split}
      \sum_{\substack{m\in\Lambda_N\\n\in\ZZ^d\setminus\Lambda_N}}
      \EE\Big(\big| \big(M_W(E)-\lambda A_{W,\omega}\big)^{-1}(0,m)
      \big|^s\Big) k_\Lambda(m,n;E)&\le K
      \sum_{\substack{m\in\Lambda_N\\\dist(m,\ZZ^d\setminus\Lambda_N)=1\\
          n\notin\Lambda_N\\\dist(n,\Lambda_N)=1}}
      \Big( B e^{-b|m|} + C' N^d e^{-\varepsilon^{-\gamma}}\Big)\\
      &\le K' N^{2d} \Big( B e^{-b N} + C' N^d
      e^{-\varepsilon^{-\gamma}}\Big).
    \end{split}
  \end{equation*}
  Now choosing, for example, $N\sim \varepsilon^{-1}$ one can make the
  sum as small as needed for sufficiently small $\varepsilon$.  The
  spectrum of $H_\omega$ near $E_0$ is then pure point by
  theorem~\ref{th-loc}.
\end{proof}

\appendix

\section{Proofs of propositions~\ref{prop-conv}   and~\ref{prop-comp}}
\label{sec:appendix}

In this subsection, we prove some auxiliary results on the finite
volume approximation for $H_A$ defined in
section~\ref{sec:local-cond-quant}.

\begin{proof}[\bf Proof of proposition~\ref{prop-conv}]
  To prove the convergence, we will use the following variant of
  theorem VIII.1.5 from \cite{Kato}: \emph{Let $T_n$, $T$ be
    self-adjoint operators. Assume that there exists a domain $D$ of
    essential self-adjointness (or a core) for $T$ such that every
    function $f$ from $D$ belongs to $\dom T_m$ for $m$ sufficiently
    large and $T_n f\to Tf$ for any such $f$. Assume that, for at
    least one non-real $z$, the sequence $\|(T_n-z)^{-1}\|$ is
    bounded, then $T_n$ converges to $T$ in the strong resolvent
    sense.}

  In our case, take as $D$ the set of the functions $f\in \dom H_A$
  having a compact support.  Clearly, any such $f$ lies in $\dom
  H^{\Lambda_n}_A$ for $n$ sufficiently large, and
  $H^{\Lambda_n}_\alpha f$ just coincides with $H_A f$ for such $n$.
  Let us show that $D$ is a domain of essential self-adjointness for
  $H_A$.

  Choose functions $u_j\in C^\infty[0,l_j]$ such that $u_j$ is $1$ in
  a neighborhood of $0$ and is $0$ in a neighborhood of $l_j$. Take an
  arbitrary $f\in\dom H_A$.  For $M\in\NN$ denote $f^M:=(f^M_{m,j})$
  with
  \begin{equation*}
    f^M_{m,j}(t):=\begin{cases}
      f_{m,j}(t), & m\in\Lambda_{M-1},\\
      u_j(t) f_{m,j}(t), & (m,j)=m\to m',\,   m\in\Lambda_M\setminus \Lambda_{M-1}, m'\notin \Lambda_M,\\ 
      u_j(l_j-t) f_{m,j}(t), & (m,j)=m\to m',\,
      m'\in\Lambda_M\setminus \Lambda_{M-1}, m\notin \Lambda_M,\\ 
      0, & \text{otherwise}.
    \end{cases}
  \end{equation*}
  Clear, $f^M\in D$ and $\D f^M\vers_{M\to+\infty}f$.  Note that $H_A
  (f^M-f)=(F^M_{m,j})$ with
  \begin{align*}
    F^M_{m,j}&= u''_j f_{m,j}+2u'_j f'_{m,j}+ \big(u_j-1\big)\big(-f''_{m,j}+U_jf_{m,j}\big),\\
    &\qquad (m,j)=m\to m', \quad m\in\Lambda_M\setminus \Lambda_{M-1}, m'\notin \Lambda_M,\\
    F^M_{m,j}&=u''_j(l_j-\cdot) f_{m,j}-2u'_j(l_j-\cdot) f'_{m,j}+ \big(u_j(l_j-\cdot)-1\big)\big(-f''_{m,j}+U_jf_{m,j}\big),\\
    &\qquad (m,j)=m\to m', \quad m'\in\Lambda_M\setminus
    \Lambda_{M-1}, m\notin \Lambda_M,
  \end{align*}
  and all other components $F^M_{m,j}$ equal to $0$. As $(f_{m,j})\in
  \HH$, $(f'_{m,j})\in \HH$ and $(-f''_{m,j}+U_j f_{m,j})\in \HH$, one
  has $\D H_A(f^M-f)\vers_{M\to\infty}0$. This shows that $H_A$ is
  essentially self-adjoint on $D$.

  To conclude the proof of the strong resolvent convergence it remains
  to show that that the norms $\|(H^{\Lambda_n}_\alpha-E)^{-1}\|$ are
  uniformly bounded for at least one non-real $E$.  Take an arbitrary
  $E$ with $\Im E\ne 0$. By \eqref{eq-me} one has $\Im M(E)/\Im E\ge
  c$ for some $c>0$.  In particular, for any $n$ one has
  \begin{equation*}
    \bigg|\Big\langle \big(\Pi_{\Lambda_n} (M(E)-A)
    \Pi_{\Lambda_n}\big)\Pi_{\Lambda_n}\xi,
    \Pi_{\Lambda_n}\xi\Big\rangle\bigg|
    = \Big|\big\langle (M_{\Lambda_n}(E)-A_{\Lambda_n})
    \Pi_{\Lambda_n}\xi,\Pi_{\Lambda_n}\xi\big\rangle\Big|\ge
    c\|\Pi_{\Lambda_n}\xi\|^2,
  \end{equation*}
  which means that $M_{\Lambda_n}(E)-A_{\Lambda_n}$ has a bounded
  inverse, and that
  $\big\|\big(M_{\Lambda_n}(E)-A_{\Lambda_n}\big)^{-1}\big\|\le
  c^{-1}$.  Now it follows from \eqref{eq-krein2} that the norms
  $\|(H^{\Lambda_n}_\alpha-E)^{-1}\|$ are uniformly bounded for any
  non-real $E$.
\end{proof}

\begin{proof}[\bf Proof of proposition~\ref{prop-comp}]
We note first that $T$ is an
  integral operator whose integral kernel is
  \begin{equation*}
    T\big((m,j,t),(m',j',t')\big)=\begin{cases}
      G_A\big((m,j,t),(m',j',t')\big), & m\in\Lambda,\\
      0, & \text{otherwise}.
    \end{cases}
  \end{equation*}
  Hence
  \begin{equation*}
    \|T\|^2_{HS}=\sum_{m\in\Lambda}\sum_{m'\in\ZZ^d}\sum_{j,j'=1}^d
    \int_{0}^{l_j}\int_0^{l_{j'}}
    \big|G_A\big((m,j,t),(m',j',t')\big)\big|^2\,dt'\,dt.
  \end{equation*}
  Using the explicit form \eqref{eq-green} for $G_A$ and the bounds
  $\|\varphi_j\|,\|\phi_j\|\le C$ (with $C$ independent of $j$) we
  obtain
  \begin{equation}
    \label{eq-SHS}
    \|T\|^2_{HS}\le 2\Big(\sum_{j= 1}^d\|G_j\|^2_{HS}\Big)|\Lambda|
    + C'\sum_{m\in\Tilde\Lambda}\sum_{m'\in\ZZ^d}\big|(M(E)-A)^{-1}(m,m')\big|^2
  \end{equation}
  with some $C'>0$, where $\Tilde\Lambda:=\{m\in\ZZ^d:
  \inf_{m'\in\Lambda}|m-m'|^2\le 2\}$.  Clearly, due to
  \eqref{eq-grf1} the Hilbert-Schmidt norms of $G_j$ are finite.
  Furthermore, as $(M(E)-A)^{-1}$ is bounded for non-real $E$, one has
  \begin{equation*}
    \sum_{m'\in\ZZ^d}\big|(M(E)-A)^{-1}(m,m')\big|^2<\infty
  \end{equation*}
  for any $m\in\ZZ^d$ by the Riesz theorem. Hence, due to the
  finiteness of $\Lambda$ (and of $\Tilde \Lambda$), the sum on the
  right-hand side of \eqref{eq-SHS} is finite.
\end{proof}

\section*{Acknowledgements}

KP was supported by the research fellowship of the Deutsche
Forschungsgemeinschaft (PA 1555/1-1) and by the joint German-New
Zealand project NZL 05/001 funded by the International Bureau at the
German Aerospace Center.

\end{document}